\numberwithin{equation}{section}
\newcommand{\D}{\displaystyle}
\newcommand{\car}{\mathbf{1}}
\newcommand{\R}{{\mathbb R}}
\newcommand{\Z}{{\mathbb Z}}
\newcommand{\N}{{\mathbb N}}
\newcommand{\C}{{\mathbb C}}
\newcommand{\esp}{{\mathbb E}}
\newcommand{\pro}{{\mathbb P}}
\newcommand{\supp}{{\rm supp}\,}
\newcommand{\tr}{{\rm tr}\,}
\newcommand{\dist}{{\rm dist}\,}
\newcommand{\vers}{\operatornamewithlimits{\to}}
\DeclareMathOperator*{\esssup}{ess\,sup}
\DeclareMathOperator*{\essinf}{ess\,inf}
\theoremstyle{plain}
\newtheorem{Th}{Theorem}[section]
\newtheorem{Le}{Lemma}[section]
\theoremstyle{definition}
\newtheorem{Rem}{Remark}[section]
\title{Spectral statistics for weakly correlated random potentials}
\author{Fr{\'e}d{\'e}ric Klopp} \address[Fr{\'e}d{\'e}ric Klopp]{IMJ, UMR CNRS 7586
  Universit{\'e} Pierre et Marie Curie Case 186, 4 place Jussieu F-75252
  Paris cedex 05 France}
\email{\href{mailto:klopp@math.jussieu.fr}{klopp@math.jussieu.fr}}
\keywords{random Schr{\"o}dinger operators, correlated potentials,
  eigenvalue statistics, local level statistics}
\subjclass[2000]{81Q10,47B80,60H25,82D30,35P20}
\begin{document}
% \setpagewiselinenumbers \modulolinenumbers[1] \linenumbers \french
%
\begin{abstract}
  We study localization and derive stochastic estimates (in
  particular, Wegner and Minami estimates) for the eigenvalues of
  weakly correlated random discrete Schr{\"o}dinger operators in the
  localized phase. We apply these results to obtain spectral
  statistics for general discrete alloy type models where the single
  site perturbation is neither of finite rank nor of fixed sign. In
  particular, for the models under study, the random potential exhibits
  correlations at any range.
  \vskip.5cm\noindent \textsc{R{\'e}sum{\'e}.}
  On {\'e}tudie la localisation et on obtient des estim{\'e}es probabilistes
  (en particulier des estim{\'e}es de Wegner et de Minami) pour les
  valeurs propres d'op{\'e}rateurs de Schr{\"o}dinger al{\'e}atoires discrets
  faiblement corr{\'e}l{\'e}s. Les r{\'e}sultats obtenus sont ensuite appliqu{\'e}s
  pour {\'e}tudier les statistiques spectrales de mod{\`e}les g{\'e}n{\'e}raux de type
  alliage pour lesquels le potentiel de simple site n'est ni de signe
  constant ni de support compact. En particulier, pour les mod{\`e}les
  {\'e}tudi{\'e}s, le potentiel al{\'e}atoire est corr{\'e}l{\'e} {\`a} toutes les {\'e}chelles.
\end{abstract}
\maketitle
\section{Introduction: a model with long range correlations}
\label{sec:intr-model-with}
Consider a single site potential $u:\ \Z^d\to\R$. Assume that
\begin{description}
\item[(S)] $u\in\ell^1(\Z^d)$;
\item[(H)] the continuous function $\D \theta\mapsto\sum_{n\in\Z^d}
  u_ne^{in\theta}$ does not vanish on $\R^d$.
\end{description}
Let $\omega=(\omega_n)_{n\in\Z^d}$ be real valued bounded
i.i.d. random variables.\\
Define the random ergodic Schr{\"o}dinger operator $H_\omega$ as
\begin{equation}
  \label{eq:13}
  H_\omega=-\Delta+\lambda\sum_{n\in\Z^d}\omega_n \tau_n u
\end{equation}
where
\begin{itemize}
\item $\tau_n u$ is the potential $u$ shifted to site $n$
  i.e. $(\tau_n u)_k=u(k-n)$
\item the coupling constant $\lambda$ is positive.
\end{itemize}
Define $\mu$ to be the common distribution of $(\omega_n)_{n\in\Z^d}$
and let $S$ be the concentration function of $\mu$, that is, $\D
S(s)=\sup_{a\in\R}\mu([a,a+s])$ for $s\geq0$.\\
Let us now assume that
\begin{description}
\item[(R)] $S$ is Lipschitz continuous at $0$ i.e. there exists $C>0$
  such that $S(s)\leq C s$ for $s\geq0$.
\end{description}
For $\Lambda\subset\Z^d$ a finite cube and $J\subset\R$ a measurable
set, we define $P_{\omega}^{(\Lambda)}(J)$ to be the spectral
projector of $H_\omega(\Lambda)$ that is $H_{\omega}$ restricted to
$\Lambda$ (with periodic boundary conditions) onto the energy interval
$J$
\begin{equation*}
  P_{\omega}^{(\Lambda)}(J)=\car_J(H_\omega(\Lambda)).
\end{equation*}
Define $N$ the integrated density of states of $H_\omega$ as
\begin{equation}
  \label{eq:2}
  N(E)=\lim_{|\Lambda|\to+\infty}\frac{1}{|\Lambda|}\tr
  P_{\omega}^{(\Lambda)}((-\infty,E])
\end{equation}
where $\tr$ denotes the trace.\\
Almost surely, the limit~\eqref{eq:2} exists for all $E$ real (see
e.g.~\cite{MR94h:47068,MR2509110}); it defines the distribution
function of some probability measure, say, $dN(E)$, the support of
which is the almost sure spectrum of $H_\omega$ (see
e.g.~\cite{MR94h:47068,MR2509110}). Let $\Sigma$ denote the almost
sure spectrum of $H_\omega$.\\
We first prove
\begin{Th}
  \label{thr:3}
  Under the assumptions (S), (H) and (R), for $H_\omega$, one has
  \begin{enumerate}
  \item the integrated density of states $N$ is uniformly Lipschitz
    continuous; $N$ is almost everywhere differentiable
  \item $H_\omega$ satisfies a Wegner and a Minami estimate, that is,
    there exists $C_\lambda>0$ such that, for any bounded interval $I$
    and $k$, a positive integer, we have
      \begin{equation*}
        \begin{aligned}
          \esp\left[\tr(P_\omega^{(\Lambda)}(I))\cdot\left(
              \tr(P_\omega^{(\Lambda)}(I)-1\right)\cdots\left(
              \tr(P_\omega^{(\Lambda)}(I)-(k-1)\right) \right]&\leq
          \left(C_\lambda\,|I|\,|\Lambda|\right)^k;
        \end{aligned}
      \end{equation*}
    \item for $\lambda$ sufficiently large, the whole spectrum of
      $H_\omega$ is localized i.e. there exists $\eta$ such that, for
      any $L\geq1$, if $\Lambda=\Lambda_L$ is the cube of center $0$
      and side-length $L$, one has, for any $L\geq1$ and any $p>d$,
      there is $q=q_{p,d}$ so that, for any $L$ large enough, the
      following holds with probability at least $1 - L^{-p}$: for any
      eigenvector $\varphi_{\omega,\Lambda,j}$ of $H_\omega(\Lambda)$,
      there exists a center of localization $x_{\omega,\Lambda,j}$ in
      $\Lambda$, so that for any $x\in \Lambda$, one has
      \begin{equation*}
        \|  \varphi_{\omega,\Lambda,j} \|_x \leq L^q
        e^{-\eta|x-x_{\omega,\Lambda,j}|}. 
      \end{equation*}
  \end{enumerate}
\end{Th}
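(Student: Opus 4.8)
The plan is to prove (2) first, then deduce (1) from it, and finally obtain (3).

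\emph{The reduction.} The role of (H) is to make convolution by $u$ invertible: since $u\in\ell^1(\Z^d)$ and $\hat u(\theta)=\sum_nu_ne^{in\theta}$ never vanishes, Wiener's lemma yields $v\in\ell^1(\Z^d)$ with $u*v=\delta_0$; truncating $v$ to a cube $\Lambda_R$ gives $v^{(R)}$ with $u*v^{(R)}=\delta_0+\rho^{(R)}$ and $\|\rho^{(R)}\|_{\ell^1}\to0$ as $R\to\infty$. Writing the random potential of \eqref{eq:13} as $V_\omega=\lambda\,u*\omega$, I would record two facts, both uniform in the cube $\Lambda$ once $\Lambda$ is enlarged to a cube $\hat\Lambda$ capturing the $\ell^1$-tail of $u$ (the tail contributing an error absorbed by a crude deterministic bound): (a) up to a controllably small operator-norm error, $H_\omega(\Lambda)$ depends only on $(\omega_n)_{n\in\hat\Lambda}$; (b) the linear map $(\omega_n)_{n\in\hat\Lambda}\mapsto(V_\omega(x))_{x\in\Lambda}$ is onto $\R^{\Lambda}$ and admits a right inverse of norm $\le C$ independent of $\Lambda$ (convolution by $v^{(R)}$, with $R$ fixed once and for all). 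The workhorse consequence for (2) is that, conditionally on the $\omega_n$ outside $\hat\Lambda$, the finite-volume potential $(V_\omega(x))_{x\in\Lambda}$ is a non-degenerate linear image of a product measure; in particular, picking for each $x$ a coordinate $m$ with $u_m\neq0$ and using (R), each value $V_\omega(x)$ has, conditionally on the others, a distribution whose concentration function is Lipschitz at $0$, with constant uniform in $x$ and $\Lambda$.

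\emph{Wegner, Minami, and (1).} Given the reduction, $H_\omega(\Lambda)$ is, conditionally, an Anderson-type operator whose diagonal entries have uniformly Lipschitz concentration functions, perturbed by a term controlled by $\|\rho^{(R)}\|_{\ell^1}$ and the $u$-tail. One then runs the classical eigenvalue-counting argument: for the diagonal model the eigenvalues are monotone in each potential value with total derivative one, so moving all the relevant $\omega_n$ along the $v^{(R)}$-profile shifts the spectrum essentially rigidly (the generator is $Q\ge(1-\|\rho^{(R)}\|_{\ell^1})\,\mathrm{id}>0$ for $R$ large); summing over the sites of $\Lambda$ and replacing the usual bounded density by the concentration-function bound of (R) gives $\esp[\tr P_\omega^{(\Lambda)}(I)]\le C_\lambda|I|\,|\Lambda|$. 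For the $k$-level Minami estimate one feeds the same conditional structure into the interlacing/Sperner-type argument (Minami, Combes--Germinet--Klein) valid for i.i.d. single-site laws with Lipschitz concentration function. The delicate point, and the step I expect to be the main obstacle, is that the errors $\rho^{(R)}$ and the $u$-tail must be threaded through the whole argument rather than absorbed crudely into a fattened interval — since $|I|$ may be arbitrarily small, a bound of the form $C(|I|+\varepsilon_0)|\Lambda|$ would be worthless. Part (1) is then immediate: $N(E_2)-N(E_1)=\lim_{|\Lambda|\to\infty}|\Lambda|^{-1}\esp[\tr P_\omega^{(\Lambda)}((E_1,E_2])]\le C_\lambda(E_2-E_1)$, and a Lipschitz function is differentiable almost everywhere.

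\emph{Part (3).} For $\lambda$ large I would run a multiscale analysis. It needs two inputs. The first is the Wegner estimate from (2), already in hand. The second is an initial length-scale estimate, which at large disorder comes from a Combes--Thomas bound: the probability that a site is ``resonant'', $\pro\big(|\lambda\,(u*\omega)(x)-E|\le K\big)\le C'K/\lambda$ (again via (R) and a coordinate with $u_m\neq0$), is small for large $\lambda$, so with overwhelming probability resonant sites in $\Lambda_L$ are sparse, and between resonant clusters the finite-volume resolvent decays exponentially; this yields the initial estimate at a scale $L_0=L_0(\lambda)$ with probability $\ge 1-L_0^{-p}$, uniformly in the energy $E\in\R$. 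The one genuine complication is that the long-range potential destroys the independence of $H_\omega$ over disjoint, far-apart cubes on which the standard multiscale induction rests; this is restored up to small errors by truncating $u$ at an $L$-dependent range $R(L)$ read off from its $\ell^1$-tail, so that the finite-volume operators used at scale $L$ depend on essentially disjoint blocks of the $\omega_n$ — the ``weakly correlated'' version of the scheme referred to in the title. Its conclusion is precisely the assertion of (3): for any $p>d$ there is $q=q_{p,d}$ so that, with probability $\ge 1-L^{-p}$, every eigenfunction of $H_\omega(\Lambda_L)$ decays exponentially with rate $\eta$ around a center of localization, up to the polynomial prefactor $L^q$.
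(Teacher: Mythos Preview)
Your strategy is right --- reduce to a conditional concentration bound on the on-site potential $\tilde\omega_x=V_\omega(x)$, then run Combes--Germinet--Klein --- but you have introduced an unnecessary complication and left the key step without proof.

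The complication: the paper does not truncate $v$. By Wiener's lemma the convolution $M:\omega\mapsto u*\omega$ is a bounded bijection of $\ell^\infty(\Z^d)$ with bounded inverse (convolution by the full $v\in\ell^1$), and the paper works directly with this. There is no $\rho^{(R)}$, so the obstacle you flag (``the errors $\rho^{(R)}$ and the $u$-tail must be threaded through the whole argument'') simply does not arise.

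The key step: your assertion that ``each value $V_\omega(x)$ has, conditionally on the others, a Lipschitz concentration function'' is correct and is exactly what drives the proof, but your justification (``pick a coordinate $m$ with $u_m\neq 0$'') is not one. Varying a single $\omega_{n_0}$ also moves the conditioned variables $(\tilde\omega_y)_{y\neq x}$, so the conditioning constrains $\omega_{n_0}$ in a non-obvious way; similarly, your vector-field shift along the $v^{(R)}$-profile mixes this with a different (sign-definite-type) Wegner mechanism and does not by itself deliver the rank-one structure that the Minami/CGK interlacing argument needs. The paper's Lemma~3.1 supplies the missing algebra: writing $M$ and $M^{-1}$ in $2\times2$ block form with respect to $\ell^\infty(\Z^d)=\C\delta_{n_0}\oplus\ell^\infty(\Z^d\setminus\{n_0\})$ at the source and $\C\delta_{m_0}\oplus\ell^\infty(\Z^d\setminus\{m_0\})$ at the target, one finds that for $k=m_0-n_0$ chosen so that $\hat M_k\hat M^{-1}_{-k}\neq0$,
\begin{equation*}
\tilde\omega_{m_0}=\kappa^{-1}\,\omega_{n_0}+D\,(\tilde\omega_m)_{m\neq m_0}
\end{equation*}
for a fixed $\kappa>0$ and a bounded linear form $D$. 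Hence the conditional concentration function of $\tilde\omega_{m_0}$ given $(\tilde\omega_m)_{m\neq m_0}$ is exactly $S(\kappa\,\cdot)$, uniformly. This is assumption~(\~R) of Section~2; with it in hand, the Wegner and Minami estimates follow by running the CGK proof verbatim with conditional expectations (Theorem~2.1), with no error terms to thread, and (1) follows from Wegner as you say.

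For (3), the paper does not run a multiscale analysis with scale-dependent truncation of $u$; it invokes the Aizenman--Schenker--Friedrich--Hundertmark fractional-moment criterion, whose only probabilistic input at large coupling is a uniform bound on the conditional density of each $\tilde\omega_n$ given the rest --- again~(\~R). No independence-at-a-distance and no truncation are needed for the localization proof itself.
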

\noindent As we shall see below, these conclusion holds for more
general models of random Schr{\"o}dinger operators with weakly dependent
randomness (see Theorems~\ref{thr:1},~\ref{thr:5} and~\ref{thr:6}
below).
\par For $L\in\N$, let $\Lambda=\Lambda_L=[-L,L]^d\cap\Z^d\subset\Z^d$
be a large box; recall that $H_\omega(\Lambda)$ is the operator
$H_\omega$ restricted to $\Lambda$ with periodic boundary
conditions. Let $|\Lambda|$ be the volume of $\Lambda$
i.e. $|\Lambda|=(2L+1)^d$.
\par $H_\omega(\Lambda)$ is an $|\Lambda|\times |\Lambda|$ real symmetric
matrix. By point (2) of Theorem~\ref{thr:3}, we know that the
eigenvalues of $H_\omega(\Lambda)$ are almost surely simple. Let us
denote them ordered increasingly by $E_1(\omega,\Lambda)<
E_2(\omega,\Lambda)< \cdots< E_{|\Lambda|}(\omega,\Lambda)$.
\par Let $E_0$ be an energy in $\Sigma$ such that $N$ is
differentiable at $E_0$. Let $n(E_0)$ denote the derivative of $N$ at
the energy $E_0$. The local level statistics near $E_0$ is the point
process defined by
\begin{equation}
  \label{eq:14}
  \Xi(\xi,E_0,\omega,\Lambda) = 
  \sum_{j=1}^{|\Lambda|} \delta_{\xi_j(E_0,\omega,\Lambda)}(\xi)
\end{equation}
where
\begin{equation}
  \label{eq:15}
  \xi_j(E_0,\omega,\Lambda)=|\Lambda|\,n(E_0)\,(E_j(\omega,\Lambda)-
  E_0),\quad 1\leq j\leq |\Lambda|.
\end{equation}
We prove
\begin{Th}
  \label{thr:4}
  In addition to (S), (H) and (R), assume that
  \begin{description}
  \item[(D)] for some $\D\eta>d-\frac12$,
    \begin{equation*}
      \limsup_{|n|\to+\infty}|n|^\alpha\,|u(n)|<+\infty.
    \end{equation*}
  \end{description}
  Then, for $\lambda$ sufficiently large, for $E_0$, an energy in
  $\Sigma$ such that $n(E_0)$ exist and be positive, when
  $|\Lambda|\to+\infty$, the point process $\Xi(E_0,\omega,\Lambda)$
  converges weakly to a Poisson process on $\R$ with intensity
  $1$. That is, for $p>0$ arbitrary, for arbitrary non empty open two
  by two disjoint intervals $I_1,\dots,I_p$ and arbitrary integers
  $k_1,\cdots,k_p$, one has
  \begin{multline}
    \label{eq:19}
    \lim_{|\Lambda|\to+\infty}
    \pro\left(\left\{\omega;\
        \begin{aligned}
          &\#\{j;\ \xi_j(E_0,\omega,\Lambda)\in
          I_1\}=k_1\\&\vdots\hskip3cm\vdots\\
          &\#\{j;\ \xi_j(E_0,\omega,\Lambda)\in I_p\}=k_p
        \end{aligned}
      \right\}\right)\\=\frac{|I_1|^{k_1}}{k_1!}e^{-|I_1|}\cdots
    \frac{|I_p|^{k_p}}{k_p!}e^{-|I_p|}.
  \end{multline}
\end{Th}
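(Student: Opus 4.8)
The plan is to follow the by-now classical Minami-type strategy for proving Poisson convergence of local eigenvalue statistics in the localized regime, adapting it to the weakly correlated setting of this paper. The three structural ingredients are already in hand: the Wegner estimate and the Minami estimate from point~(2) of Theorem~\ref{thr:3}, the complete localization (with polynomially small exceptional probability) from point~(3), and the existence and positivity of the density of states $n(E_0)$. The overall scheme is first to reduce the study of $\Xi(\xi,E_0,\omega,\Lambda)$ on the big box $\Lambda=\Lambda_L$ to the superposition of the point processes coming from a partition of $\Lambda$ into $\sim |\Lambda|/\ell^d$ disjoint sub-cubes $\Lambda_\ell(\gamma_k)$ of an intermediate side-length $\ell=\ell(L)$ chosen so that $1\ll\ell\ll L$ (for instance $\ell\sim L^\alpha$ for a small $\alpha>0$), then to show that these contributions are asymptotically independent and each individually rare, and finally to invoke a standard criterion (e.g.\ the one in Kallenberg's theorem, or the direct triangular-array computation in Molchanov/Minami) that a null array of asymptotically independent rare point processes converges to a Poisson process whose intensity is the limit of the sum of the individual intensities.

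The key steps, in order, are as follows. \emph{Step 1 (localization decouples the box).} Using point~(3) of Theorem~\ref{thr:3}, outside an event of probability $O(L^{-p})$ every eigenfunction $\varphi_{\omega,\Lambda,j}$ with eigenvalue near $E_0$ is exponentially localized around a center $x_{\omega,\Lambda,j}$; hence, up to an error that is exponentially small in $\ell$, it is well approximated by an eigenfunction of $H_\omega(\Lambda_\ell(\gamma_k))$ for the sub-cube containing its center. One must be careful here: the potential $\sum_n \omega_n\tau_n u$ is long-range, so the restriction of $H_\omega$ to a sub-cube still depends on \emph{all} the $\omega_n$; however, by assumption (S) (and, quantitatively, (D)), the tail of $u$ is summable, so the influence of the $\omega_n$ with $n$ far from $\Lambda_\ell(\gamma_k)$ on the relevant part of the spectrum is controlled — this is exactly what the decay exponent $\alpha>d-\tfrac12$ is there to buy, matching the polynomial error terms in the Wegner/Minami/localization bounds. \emph{Step 2 (rescaled eigenvalues).} On the scale $\xi=|\Lambda|\,n(E_0)(E-E_0)$, an interval $I_j$ of fixed length corresponds to an energy window of length $|I_j|/(|\Lambda|\,n(E_0))\to 0$; by the Wegner estimate the expected number of eigenvalues of $H_\omega(\Lambda_\ell(\gamma_k))$ in such a window is $O\big(|I_j|\,\ell^d/|\Lambda|\big)\to 0$, so each sub-cube process is rare, and by the Minami estimate the probability that a given sub-cube contributes two or more eigenvalues is $O\big((\ell^d/|\Lambda|)^2\big)$, so the total probability of any sub-cube being doubly occupied is $O\big(\ell^d/|\Lambda|\big)\to 0$ — the array is \emph{simple} in the limit. \emph{Step 3 (asymptotic independence).} Because distinct sub-cubes are separated and the single-site potential decays, the finite families of Hamiltonians $\{H_\omega(\Lambda_\ell(\gamma_k))\}_k$ are \emph{approximately independent}: conditioning on the $\omega_n$ outside a slightly enlarged sub-cube changes the relevant spectral data only by an amount that is small in $\ell$ and summable in $k$; one then compares the joint Laplace functional of the true array to that of a genuinely independent array and shows the difference tends to $0$. \emph{Step 4 (identify the intensity).} Summing the individual intensities, $\sum_k |\Lambda_\ell(\gamma_k)|\,n(E_0)\,|I_j|/(|\Lambda|\,n(E_0)) \to |I_j|$ by definition~\eqref{eq:2} of $N$ and differentiability of $N$ at $E_0$ (this is where $n(E_0)>0$ enters, guaranteeing the rescaling in~\eqref{eq:15} is non-degenerate), so the limit process is Poisson of intensity $1$, giving exactly~\eqref{eq:19}.

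I expect the main obstacle to be Step~3 — the asymptotic independence of the sub-cube processes — precisely because the potential in~\eqref{eq:13} has correlations at \emph{all} ranges. In the usual alloy-type or Anderson setting, the restrictions of $H_\omega$ to disjoint regions depend on disjoint sets of i.i.d.\ variables and independence is exact; here it is only approximate, and the whole point of hypotheses (S), (H) and (D) is to make that approximation quantitative. Concretely, one needs: (i) a deterministic estimate showing that replacing the ``far'' randomness $\{\omega_n : \dist(n,\Lambda_\ell(\gamma_k))\geq \ell/2\}$ by fixed values perturbs the spectral projector $P_\omega^{(\Lambda_\ell(\gamma_k))}(I_j/(|\Lambda| n(E_0)) + E_0)$ in trace norm by an amount bounded by something like $\ell^d\cdot\|u\|_{\ell^1(|n|\geq \ell/2)}\cdot(\text{resolvent factors})$, controlled via the Combes–Thomas / Helffer–Sjöstrand machinery already implicitly used for the Minami estimate; (ii) a union bound over the $O(|\Lambda|/\ell^d)$ sub-cubes that still beats all the polynomial losses — which forces the relation $\alpha>d-\tfrac12$ and a careful choice of $\ell$ as a power of $L$. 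Assumption (H) (non-vanishing of the symbol of $u$) is what makes the single-site perturbation ``effectively of full rank and controllable'' after a Fourier-multiplier change of variables, so that the Wegner/Minami constants $C_\lambda$ are finite and uniform in the sub-cube size; one must check that this uniformity survives the localization to sub-cubes. Once these perturbative estimates are in place, the rest is the standard null-array Poisson limit theorem and poses no essential difficulty.
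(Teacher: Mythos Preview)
Your overall architecture matches the paper's: reduce to sub-cubes via localization and Theorem~\ref{thr:7}, use Wegner and Minami to control occupations, then argue Poisson convergence for an asymptotically independent null array. Your reading of hypothesis~(H) is also right: it is used, via a linear change of variables in the $\omega$'s (Lemma~\ref{le:1} and Theorem~\ref{thr:2}), to verify the conditional-regularity assumption~(\~R) that drives Wegner, Minami, and localization uniformly.

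The genuine gap is in your Step~3. You propose a \emph{deterministic} perturbation estimate controlled by the $\ell^1$ tail $\|u\|_{\ell^1(|n|\geq R)}$. Under~(D) this tail is of order $R^{d-\eta}$, and to resolve eigenvalues at the level-spacing scale $|\Lambda|^{-1}\sim L^{-d}$ you would need $R^{d-\eta}=o(L^{-d})$. Since the separation $R$ between sub-cubes is strictly smaller than $L$, this forces roughly $\eta>2d$, far stronger than the stated hypothesis $\eta>d-\tfrac12$. Your assertion that ``the relation $\alpha>d-\tfrac12$'' is what the union bound forces is therefore not supported by the mechanism you sketch; a purely deterministic comparison cannot reach that threshold.

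The paper's key idea, and the reason for the threshold $d-\tfrac12$, is that the tail of the potential at a site,
\begin{equation*}
  \sum_{|m|\geq\ell'}\omega_{n-m}\,u(m),
\end{equation*}
is a sum of \emph{independent, centered, bounded} random variables, so Hoeffding's inequality gives a probabilistic bound governed by the $\ell^2$ tail $\sum_{|m|\geq\ell'}|u(m)|^2\sim(\ell')^{d-2\eta}$ rather than the $\ell^1$ tail. Concretely, the paper introduces an auxiliary operator $\tilde H_{\omega,\ell''}$ obtained by truncating $u$ to $\{|m|\leq\ell''\}$; this truncated model genuinely satisfies independence at a distance across well-separated sub-cubes (Lemma~\ref{le:2}), and Hoeffding (Lemma~\ref{le:3}) shows that with overwhelming probability $\|\tilde H_{\omega,\ell''}(\Lambda_\ell(\gamma_j))-H_\omega(\Lambda_\ell(\gamma_j))\|\leq\varepsilon L^{-d}$ simultaneously for all $j$. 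Eigenvalue counts are then transferred between the two operators via a quasi-mode/localization argument (Lemma~\ref{le:5}), and the Poisson computation is carried out for the genuinely independent family.

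A secondary point: your scale choice ``$\ell\sim L^\alpha$ for a small $\alpha>0$'' goes in the wrong direction. To make the Hoeffding bound effective one needs the \emph{separation} scale $\ell'\asymp L^{\beta'}$ with $\beta'$ close to~$1$; correspondingly the paper takes $\ell\asymp L^\beta$ with $\beta$ close to~$1$ (and $\tilde\rho$ close to~$0$, $\alpha$ close to~$1$ in Theorem~\ref{thr:7}). The number of sub-cubes $J\sim L^{d(1-\beta)}$ still tends to infinity, which is all that the Poisson limit requires.
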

\noindent When the single site potential $u$ is supported in a single
point, this result was first obtained~\cite{MR97d:82046}. This result
is typical of the localized phase of random operators; a general
analysis of the spectral statistics for the eigenvalues was developed
in~\cite{Ge-Kl:10} (see also~\cite{MR2885251}). There, it was shown
that, in the localized regime, under the additional ``independence at
a distance'' (IAD) assumption, if the random model satisfies a Wegner
and a Minami estimate (see below for more details), then the
convergence to Poisson for the local statistics holds.\\
The (IAD) assumption requires that there exists some fixed positive
distance, say, $D$ such that, for any $\Lambda$ and $\Lambda'$ at
least at a distance $D$ apart from each other, the random operators
$H_\omega(\Lambda)$ and $H_\omega(\Lambda')$ are independent. For a
general $u$ chosen as above, this assumption is not fulfilled.  There
are long range correlations. We will show that they do not suffice to
induce correlations between the eigenvalues asymptotically.
\begin{Rem}
  \label{rem:1}
  Note that if $u$ is not too ``lacunary'' assumption (S) actually
  implies that $|u(n)|\lesssim |n|^{-d}$, that is, in particular that
  assumption (D) holds.\\
  From the analysis done in~\cite{Ge-Kl:10}, it is quite clear that,
  for the model~\eqref{eq:13}, if $|u(n)|=o(|n|^{-2d+1-\varepsilon})$
  for some $\varepsilon>0$, then, the correlation are much smaller
  than the typical spacing between the eigenvalues; thus, they should
  not play a role in the statistics.\\
  As the proof of Theorem~\ref{thr:4} shows (see, in particular, the
  proof of Lemma~\ref{le:3}), condition (D) can be relaxed to 
  \begin{equation*}
    \limsup_{|n|\to+\infty}|n|^{d-1/2}\,\log|n|\,|u(n)|<\eta
  \end{equation*}
  for sufficiently small $\eta$.
\end{Rem}
\noindent Following the proof of Theorem~\ref{thr:4} and that
of~\cite[Theorem 1.13]{Ge-Kl:10}, one proves
\begin{Th}
  \label{thr:8}
  Under the assumptions of Theorem~\ref{thr:4}, for $1\leq j\leq
  |\Lambda|$, let $x_j(E_0,\omega,\Lambda)$ be a localization center
  associated to the eigenvalue $E_j(\omega,\Lambda)$ (see
  Theorem~\ref{thr:3}). Then, the joint (eigenvalue - localization
  center) local point process defined by
  \begin{equation*}
    \Xi^2_\Lambda(\xi,x;E_0,\Lambda) = \sum_{j=1}^N
    \delta_{\xi_j(E_0,\omega,\Lambda)}(\xi)
    \otimes\delta_{x_j(E_0\omega,\Lambda)/L}(x)
  \end{equation*}
  converges weakly to a Poisson point process on $\R\times [-1/2,1/2]^d$
  with intensity $1$.
\end{Th}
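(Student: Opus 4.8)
The plan is to follow the abstract scheme of~\cite{Ge-Kl:10} for the joint eigenvalue–localization-center process, which reduces the theorem to checking the same structural hypotheses already used for Theorem~\ref{thr:4}: localization with good volume-dependent bounds (point (3) of Theorem~\ref{thr:3}), the Wegner and Minami estimates (point (2)), and a quantitative decoupling of the random Hamiltonian restricted to boxes that are far apart. Since the only feature distinguishing the present model from the classical (IAD) setting is the presence of long-range correlations coming from the tails of $u$, the heart of the argument is to show that these correlations are asymptotically irrelevant on the scale $|\Lambda|^{-1}$ of the eigenvalue spacing — exactly the content exploited in the proof of Theorem~\ref{thr:4}, and, in particular, in Lemma~\ref{le:3}.

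First I would partition $\Lambda=\Lambda_L$ into a grid of sub-cubes $(\Lambda_\ell^{(k)})_k$ of intermediate side length $\ell$, with $1\ll \ell\ll L$ chosen as a suitable power of $\log L$ (or a small power of $L$), as in~\cite{Ge-Kl:10}. Localization (point (3) of Theorem~\ref{thr:3}) guarantees that, with probability at least $1-L^{-p}$, every eigenfunction of $H_\omega(\Lambda)$ with eigenvalue near $E_0$ is exponentially concentrated, up to a polynomial loss $L^q$, around its localization center; hence each such eigenpair is, to exponentially small error, the eigenpair of $H_\omega$ restricted to the sub-cube containing that center, enlarged by a buffer of size $O(\log L)$. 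This lets one replace the true process $\Xi^2_\Lambda$ by the superposition, over $k$, of the local processes attached to the $\Lambda_\ell^{(k)}$, each carrying its eigenvalues rescaled by $|\Lambda|\,n(E_0)$ and its localization centers rescaled by $L$. One then shows, using the Minami estimate, that with high probability each sub-cube contributes at most one eigenvalue in any fixed rescaled window, so that the superposition is, asymptotically, a sum of independent Bernoulli contributions; the Wegner estimate together with the Lipschitz continuity and differentiability of $N$ (point (1)) identifies the limiting intensity as Lebesgue measure on $\R\times[-1/2,1/2]^d$, the $[-1/2,1/2]^d$ factor arising simply because the localization centers, once divided by $L$, equidistribute over the macroscopic cube.

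The one genuinely new ingredient, and the step I expect to be the main obstacle, is the decoupling of the sub-cube Hamiltonians: because $u$ has infinite support, $H_\omega(\Lambda_\ell^{(k)})$ and $H_\omega(\Lambda_\ell^{(k')})$ are never exactly independent. What one must prove is that they can be \emph{coupled} with independent copies up to an error that is negligible at the relevant scale. Concretely, writing the potential seen in $\Lambda_\ell^{(k)}$ as a sum of the contributions of the $\omega_n$ with $n$ near $\Lambda_\ell^{(k)}$ plus a ``far'' remainder $\sum_{\dist(n,\Lambda_\ell^{(k)})\ge r}\omega_n\,\tau_n u$, assumption (D) forces the operator norm of this remainder to be bounded by $\sum_{|m|\ge r}|u(m)|\lesssim r^{-(\eta-d+1)}$, which is $o(r^{-1/2})$ for some fixed small radius $r$ and, after optimizing $r$ against $\ell$ and $L$, is $o\big((|\Lambda|\,\log L)^{-1}\big)$ — smaller than a single level spacing times a logarithmic safety factor. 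Feeding this perturbation bound, together with the Wegner estimate, into the standard sandwiching argument (first-order eigenvalue perturbation plus a union bound over the $O(|\Lambda|/\ell^d)$ sub-cubes and over the eigenvalues in the window) shows that resolving the process at precision $|\Lambda|^{-1}$ does not feel the far-field randomness, so the truncated sub-cube Hamiltonians may be taken genuinely independent at a negligible cost. This is precisely where the exponent $d-\tfrac12$ in (D) — equivalently the relaxed condition $|u(n)|\,|n|^{d-1/2}\log|n|<\eta$ of Remark~\ref{rem:1} — enters, and it is the estimate carried out in Lemma~\ref{le:3}; the remainder of the proof of Theorem~\ref{thr:8} is then a line-by-line transcription of~\cite[Proof of Theorem 1.13]{Ge-Kl:10}.
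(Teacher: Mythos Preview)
Your overall architecture is correct and matches the paper: the proof of Theorem~\ref{thr:8} is obtained by running the argument of Theorem~\ref{thr:4} (box decomposition via Theorem~\ref{thr:7}, Wegner and Minami estimates, Lemma~\ref{le:3} for decoupling) and then transcribing the proof of~\cite[Theorem~1.13]{Ge-Kl:10}. The paper itself gives no further details.

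There is, however, a genuine gap in your description of the decoupling step. You bound the far-field potential \emph{deterministically} by $\sum_{|m|\ge r}|u(m)|$ and claim this is $\lesssim r^{-(\eta-d+1)}$ and ultimately $o(|\Lambda|^{-1})$. Under assumption~(D) one only has $|u(n)|\lesssim|n|^{-\eta}$ with $\eta>d-\tfrac12$; for $\eta\le d$ the tail sum $\sum_{|m|\ge r}|m|^{-\eta}$ does not even decay polynomially (assumption~(S) gives $\ell^1$ but no rate), and even when $\eta>d$ the resulting bound $r^{d-\eta}$ cannot be made $o(L^{-d})$ with $r\le L$ unless $\eta>2d$. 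So your deterministic argument works only under the much stronger hypothesis alluded to in Remark~\ref{rem:1}, not under~(D). The threshold $\eta>d-\tfrac12$ does \emph{not} enter where you say it does.

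What actually makes Lemma~\ref{le:3} work is a \emph{probabilistic} bound: one applies Hoeffding's inequality to the centered sum $\sum_{|m|\ge\ell''}\omega_{n-m}u(m)$, which controls the far-field by $\sum_{|m|\ge\ell''}|u(m)|^{2}\lesssim (\ell'')^{-(2\eta-d+1)}$ rather than the $\ell^1$-tail. This is why only $\eta>d-\tfrac12$ (i.e.\ $2\eta-d+1>0$) is required, and why the truncation scale must be taken as $\ell''\asymp L^{\beta'}$ with $\beta'\in\bigl(\tfrac{d}{2\eta-d+1},1\bigr)$ close to~$1$ --- not a power of $\log L$ or a small power of $L$ as you suggest. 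Correspondingly $\ell$ and $\ell'$ in Theorem~\ref{thr:7} are also taken with exponents $\beta,\beta'$ close to~$1$. Once you replace your deterministic estimate by this Hoeffding argument (and adjust the scales), the rest of your outline goes through.
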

\noindent In~\cite{Ge-Kl:10}, many other spectral statistics are
studied; using the ideas developed there and in the present work, they
can also be studied for the model~\eqref{eq:13}. \vskip.2cm\noindent
To close this introduction, we note that, in~\cite{Ve-Ta:12},
Theorem~\ref{thr:4} is proved for $H_\omega$ under more restrictive
assumptions: $u$ is assumed to be of compact support and the common
density of the random variables $(\omega_n)_{n\in\Z^d}$ is supposed to
be sufficiently regular. In particular, under these assumptions, the
operator $H_\omega$ satisfies the independence at a distance
assumption.
\section{Weakly correlated random potentials}
\label{sec:weakly-corr-rand}
On $\ell^2(\Z^d)$, consider the discrete Anderson model
\begin{equation}
  \label{eq:1}
  H_{\tilde\omega}=-\Delta+\lambda V_{\tilde\omega}\text{ where
  }V_{\tilde\omega}=((\tilde\omega_n\delta_{nm}))_{(n,m)\in\Z^d\times\Z^d} 
\end{equation}
and $\delta_{nm}$ is the Kronecker symbol and $\lambda$ is positive
coupling constant.\\
Assume that the random variables $(\tilde\omega_n)_{n\in\Z^d}$ are non
trivial, real valued and bounded. They are not assumed to be
independent or identically distributed.\\
Let $\Lambda\subset\Z^d$ be a finite cube and $J\subset\R$ a
measurable set, we define $P_{\tilde\omega}^{(\Lambda)}(J)$ to be the
spectral projector of $H_{\tilde\omega}(\Lambda)$, that is,
$H_{\tilde\omega}$ restricted to $\Lambda$ (with periodic boundary
conditions) onto the energy interval $J$
\begin{equation*}
  P_{\tilde\omega}^{(\Lambda)}(J)=\car_J(H_{\tilde\omega}(\Lambda)).
\end{equation*}
For the operator $H_{\tilde\omega}$, we will now state a
number of results on
\begin{itemize}
\item eigenvalue decorrelation estimates,
\item localization,
\item local representations of the eigenvalues of
  $H_{\tilde\omega}(\Lambda)$ in some energy interval.
\end{itemize}
In the case of independent random variables, these results are the
basic building blocks of the analysis of the spectral statistics
performed in~\cite{Ge-Kl:10}. We shall then show that these results
can be applied to the model~\eqref{eq:13} and analyze the spectral
statistics of this model to prove Theorem~\ref{thr:4}.
\subsection{Basic estimates on eigenvalues - decorrelation estimates}
\label{sec:basic-estim-eigenv}
We first need some estimates on the occurrence of eigenvalues in given
intervals as well as on their correlations. These give descriptions of
the eigenvalues seen as random variables that have proved crucial in
the study of spectral statistics. In these estimates, we actually do
not need the random variables $(\tilde\omega_n)_{n\in\Z^d}$ to be
identically distributed.\\ 
Define $\tilde\mu_m$ to be the distribution of
$\tilde\omega_m$ conditioned on $(\tilde\omega_n)_{n\not=m}$ and let
$\tilde S_m$ be the concentration function of $\tilde\mu_m$, that is,
$\D \tilde S_m(s)=\sup_{a\in\R} \tilde\mu_m([a,a+s])$. For
$\Lambda\subset\Z^d$, define
\begin{equation*}
  \tilde
  S_\Lambda:=\sup_{m\in\Lambda}\,\esssup_{(\tilde\omega_n)_{n\not=m}}
  \tilde S_m  
\end{equation*}
We prove
\begin{Th}
  \label{thr:1}
  There exists $C>0$ such that
  \begin{itemize}
  \item for any bounded interval $I$, we have
    \begin{equation}
      \label{eq:7}
      \esp\left[\tr(P_{\tilde\omega}^{(\Lambda)}(I))
      \right] \leq C\tilde S_{\Lambda}(|I|)|\Lambda|;
    \end{equation}
  \item for any two bounded intervals $I_1\subset I_2$, we have
    \begin{equation}
      \label{eq:3}
      \esp\left[\tr(P_{\tilde\omega}^{(\Lambda)}(I_1))
        \left(\tr(P_{\tilde\omega}^{(\Lambda)}(I_2)-1\right)
      \right]\leq C^2 \tilde S_{\Lambda}(|I_1|)\tilde
      S_{\Lambda}(|I_2|)|\Lambda|^2;
    \end{equation}
  \item specializing to the case $I_1=I_2=I$,~(\ref{eq:3}) reads
    \begin{equation}
      \label{eq:4}
      \esp\left[\tr(P_{\tilde\omega}^{(\Lambda)}(I))
        \left(\tr(P_{\tilde\omega}^{(\Lambda)}(I)-1\right)
      \right]\leq C^2 \left(\tilde S_{\Lambda}(|I|)|\Lambda|\right)^2;
    \end{equation}
  \item     for any $k\geq1$, and arbitrary intervals $I_1\subset
    I_2\subset\cdots\subset I_k$, one has
    \begin{multline}
      \label{eq:5}
      \esp\left[\tr(P_{\tilde\omega}^{(\Lambda)}(I_1))
        \left(\tr(P_{\tilde\omega}^{(\Lambda)}(I_2)-1\right)\cdots
        \left(\tr(P_{\tilde\omega}^{(\Lambda)}(I_k)-(k-1)\right)
      \right]\\\leq C^k |\Lambda|^k\prod_{j=1}^k\tilde
      S_{\Lambda}(|I_j|);
    \end{multline}
  \item specializing to the case $I_1=\cdots=I_k=I$,~(\ref{eq:5})
    reads
    \begin{multline}
      \label{eq:6}
      \esp\left[\tr(P_{\tilde\omega}^{(\Lambda)}(I))
        \left(\tr(P_{\tilde\omega}^{(\Lambda)}(I)-1\right)\cdots
        \left(\tr(P_{\tilde\omega}^{(\Lambda)}(I)-(k-1)\right)
      \right]\\\leq C^k \left(\tilde
        S_{\Lambda}(|I|)|\Lambda|\right)^k.
    \end{multline}
  \end{itemize}
\end{Th}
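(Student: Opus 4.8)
The plan is to prove \eqref{eq:5} for every $k\ge 1$ — then \eqref{eq:7} is the case $k=1$ and \eqref{eq:3}, \eqref{eq:4}, \eqref{eq:6} are specializations — by reducing it, through a trace/minor decomposition and the tower property, to the optimal rank-one, resp. rank-$k$, spectral averaging estimates (in the Combes--Hislop--Klopp, resp. Combes--Germinet--Klein, spirit), which one feeds only the \emph{conditional} single site laws $\tilde\mu_m$, revealing one variable $\tilde\omega_m$ at a time.

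For the Wegner estimate \eqref{eq:7} I would start from $\tr(P_{\tilde\omega}^{(\Lambda)}(I))=\sum_{m\in\Lambda}\langle\delta_m,\car_I(H_{\tilde\omega}(\Lambda))\delta_m\rangle$, fix $m$, and condition on $(\tilde\omega_n)_{n\neq m}$. Relative to that $\sigma$-algebra, $H_{\tilde\omega}(\Lambda)=A_m+\lambda\tilde\omega_m|\delta_m\rangle\langle\delta_m|$ with $A_m$ deterministic and the perturbation a non-negative rank-one operator, monotone in $\tilde\omega_m$ since $\lambda>0$. The analytic input is the rank-one averaging bound: for any self-adjoint $A$, any unit vector $\phi$ and any probability measure $\nu$,
\begin{equation*}
  \int_{\R}\big\langle\phi,\car_I\big(A+t|\phi\rangle\langle\phi|\big)\phi\big\rangle\,d\nu(t)\ \le\ C\,S_\nu(|I|),
\end{equation*}
which rests on the exact spectral averaging identity $\int_{\R}\langle\phi,\car_I(A+t|\phi\rangle\langle\phi|)\phi\rangle\,dt=|I|$ together with its transfer from Lebesgue measure to $\nu$ through the concentration function (equivalently, $\int_{\R}\mathrm{Im}\,\langle\phi,(A+t|\phi\rangle\langle\phi|-E-i\eta)^{-1}\phi\rangle\,d\nu(t)\le C\eta^{-1}S_\nu(\eta)$ for the Poisson-regularised resolvent, combined with $\car_I(H)\le C\int_I\mathrm{Im}(H-E-i|I|)^{-1}\,dE$ so as to take $\eta\simeq|I|$). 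Applying this with $\nu=\tilde\mu_m$, whose concentration function is $\le\tilde S_\Lambda$ up to the rescaling by $\lambda$ (costing only a factor $1+\lambda^{-1}$ by sub-additivity of concentration functions), bounds $\esp[\langle\delta_m,\car_I(H_{\tilde\omega}(\Lambda))\delta_m\rangle\mid(\tilde\omega_n)_{n\neq m}]$ by $C\tilde S_\Lambda(|I|)$; summing over $m\in\Lambda$ gives \eqref{eq:7}.

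For \eqref{eq:5}, with $n_j:=\tr(P_{\tilde\omega}^{(\Lambda)}(I_j))$, $P_j:=\car_{I_j}(H_{\tilde\omega}(\Lambda))$ and $I_1\subseteq\cdots\subseteq I_k$, I would first record the algebraic identity
\begin{equation*}
  n_1(n_2-1)\cdots(n_k-(k-1))=k!\,\mathrm{Tr}_{\wedge^k\ell^2(\Lambda)}\big[\mathcal A_k\,(P_1\otimes\cdots\otimes P_k)\,\mathcal A_k\big],
\end{equation*}
with $\mathcal A_k$ the antisymmetriser on $\ell^2(\Lambda)^{\otimes k}$; it follows by expanding $\mathrm{Tr}[\mathcal A_k(P_1\otimes\cdots\otimes P_k)]$ over permutations and using $P_iP_j=P_{\min(i,j)}$, which holds because $I_1\subseteq\cdots\subseteq I_k$. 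Since $\mathcal A_k(P_1\otimes\cdots\otimes P_k)\mathcal A_k=(B\mathcal A_k)^*(B\mathcal A_k)\ge0$ with $B=P_1\otimes\cdots\otimes P_k$, expanding the trace in the orthonormal basis $\{\delta_{m_1}\wedge\cdots\wedge\delta_{m_k}:m_1<\cdots<m_k\}$ of $\wedge^k\ell^2(\Lambda)$ writes the left-hand side as a sum over $k$-subsets $\{m_1<\cdots<m_k\}\subseteq\Lambda$ of non-negative ``Minami minors'' $\Delta(m_1,\dots,m_k)$ built from the entries $\langle\delta_{m_i},P_j\delta_{m_{i'}}\rangle$. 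For a fixed $k$-subset I would condition on $(\tilde\omega_n)_{n\notin\{m_1,\dots,m_k\}}$ and integrate $\tilde\omega_{m_1},\dots,\tilde\omega_{m_k}$ successively: at the $i$-th step the conditional law of $\tilde\omega_{m_i}$ given everything else is $\tilde\mu_{m_i}$, and the rank-one averaging step for the minor (the determinantal refinement of the Wegner averaging above: integrating one parameter against $\nu$ peels off a factor $\le C\,S_\nu(|I_j|)$ for some $j$ and reduces the $k'\times k'$ minor to a $(k'-1)\times(k'-1)$ one) contributes $\le C\tilde S_\Lambda(|I_j|)$. After $k$ steps $\esp[\Delta(m_1,\dots,m_k)]\le C^k\prod_{j=1}^k\tilde S_\Lambda(|I_j|)$; summing over the at most $|\Lambda|^k/k!$ subsets and restoring the $k!$ gives \eqref{eq:5}.

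The main obstacle is the spectral averaging step in the form required. For arbitrary (possibly singular) probability measures even the rank-one bound above — still more its determinantal rank-$k$ refinement — is delicate; this is exactly the content of the Combes--Hislop--Klopp and Combes--Germinet--Klein estimates, and here it must additionally be run \emph{conditionally}, the subtlety being that the $(\tilde\omega_n)$ are not even conditionally independent, so one cannot integrate all $k$ parameters of a Minami minor at once and must reveal them one at a time via the tower property, checking at each stage that what survives is again a lower-order quantity of the same type, so that the induction on $k$ closes. The recurring technical point throughout is matching the length scale $|I_j|$ on which $\car_{I_j}$ lives with the scale at which the resolvent identities are effective, which forces the choice $\eta\simeq|I|$ and the systematic use of sub-additivity of $S_\nu$.
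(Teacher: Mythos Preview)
Your Wegner argument is correct and coincides with the paper's: write the trace as a sum of diagonal matrix elements, condition on $(\tilde\omega_n)_{n\neq m}$, and apply rank-one spectral averaging against the conditional law $\tilde\mu_m$.

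For the Minami-type estimates, however, you take a genuinely different route from the paper, and the route has a gap. The paper does \emph{not} use the antisymmetric-tensor/determinant formalism; it follows Combes--Germinet--Klein, whose key device is not a determinantal refinement of spectral averaging but the eigenvalue-monotonicity lemma: for a rank-one perturbation $H_t=H_0+t\pi_n$ and $s\le t$ one has $\tr\car_{(a,b]}(H_s)\le 1+\tr\car_{(a,b]}(H_t)$. With this, one writes
\[
\tr P^{(\Lambda)}_{\tilde\omega}(I_1)\bigl(\tr P^{(\Lambda)}_{\tilde\omega}(I_2)-1\bigr)
=\sum_{n\in\Lambda}\langle\delta_n,P^{(\Lambda)}_{\tilde\omega}(I_1)\delta_n\rangle
\bigl(\tr P^{(\Lambda)}_{\tilde\omega}(I_2)-1\bigr)
\le \sum_{n\in\Lambda}\langle\delta_n,P^{(\Lambda)}_{\tilde\omega}(I_1)\delta_n\rangle\,
\tr P^{(\Lambda)}_{((\tilde\omega_m)_{m\neq n},\tau_n)}(I_2)
\]
for any $\tau_n\ge M$. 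The point is that the second factor no longer depends on $\tilde\omega_n$, so the conditional spectral averaging in $\tilde\omega_n$ acts only on the first factor, producing $\tilde S_\Lambda(|I_1|)$; one then chooses $\tau_n=\tilde\omega_n+M-m$, takes full expectation, and applies the Wegner bound to the shifted family. Induction gives \eqref{eq:5}.

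Your peeling step is exactly where this decoupling is missing. In the minor
\[
\Delta(m_1,\dots,m_k)=\bigl\langle\delta_{m_1}\wedge\cdots\wedge\delta_{m_k},\,(P_1\otimes\cdots\otimes P_k)\,\delta_{m_1}\wedge\cdots\wedge\delta_{m_k}\bigr\rangle,
\]
\emph{every} entry $\langle\delta_{m_i},P_j\delta_{m_{i'}}\rangle$ depends on $\tilde\omega_{m_1}$, not only the first row/column. Integrating $\tilde\omega_{m_1}$ against its conditional law therefore does not leave behind a $(k-1)\times(k-1)$ minor of spectral projectors of a fixed operator; what remains still carries the $\tilde\omega_{m_1}$-dependence through the other entries. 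For i.i.d.\ random variables this is harmless because one may integrate all $k$ variables simultaneously (Minami, Graf--Vaghi, Bellissard--Hislop--Stolz), but under mere conditional regularity that simultaneous integration is unavailable, as you yourself note. The assertion that ``what survives is again a lower-order quantity of the same type'' is precisely what would need proof, and it is not what CGK provide: their contribution is the monotonicity lemma, which replaces the determinant machinery altogether and makes the one-at-a-time conditioning work trivially. So the obstacle you flag is real, but the resolution you point to is the wrong one; the missing ingredient is Lemma~\ref{le:8}.
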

\begin{Rem}
  \label{rem:2}
  In Theorem~\ref{thr:1}, as the proof shows, the results stay valid
  if the spectral projector $P_{\tilde\omega}^{(\Lambda)}(I)$ is
  replaced by the spectral projector on $I$ for $H_{\tilde\omega}$
  restricted to $\Lambda$ with different boundary conditions.
\end{Rem}
\noindent Estimate~\eqref{eq:7} is the so called ``Wegner estimate'';
it was first obtained in~\cite{MR639135} when the random variables
$(\omega_n)_{n\in\Z^d}$ are independent and identically
distributed. We refer the reader to~\cite{MR2378428} for more details
on Wegner estimates for various models. Estimate~(\ref{eq:5}) is the
so called ``Minami estimate''; it was first obtained
in~\cite{MR97d:82046} when $(\tilde\omega_n)_n$ are independent and
identically distributed (see also~\cite{MR2290333,MR2360226}). Later,
estimates~\eqref{eq:4}~--~\eqref{eq:6} were proved for independent
random variables in~\cite{MR2505733}. The method developed
in~\cite{MR2505733} can be transposed verbatim to the present case as
the proof below shows.
\begin{proof}[Proof of Theorem~\ref{thr:1}]
  The proof essentially consists in repeating verbatim the proof
  of~\cite[Theorem 2.1]{MR2505733} replacing expectations by
  conditional expectations. We give some details for
  the reader's convenience.\\
  We first recall the following well-known spectral averaging lemma
  (in a form specialized to our setting)
  \begin{Le}
    \label{le:7}
    Let $H_0$ be a self-adjoint operator of $\ell^2(\Z^d)$ and define
    $H_t=H_0+t\pi_0$ where $t\in\R$ and
    $\pi_0=((\delta_{n0}\delta_{0m}))_{(n,m)\in\Z^d\times\Z^d}$ is the
    orthogonal projector on the unit vector
    $\delta_0=(\delta_{n0})_{n\in\Z^d}$.\\
    Let $\car_{I}(H_t)$ be the spectral projector of $H_t$ associated
    to the energy interval $I$.\\
    Let $\mu$ be an arbitrary, non trivial, compactly supported
    probability measure on $\R$. \\
    Then, for any bounded interval $I$, one has
    \begin{equation*}
      \int_\R\langle\delta_0,\car_{I}(H_t)\delta_0\rangle
      d\mu(t)\leq 8\,S_\mu(|I|).
    \end{equation*}
  \end{Le}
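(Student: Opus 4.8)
The plan is to reduce the whole statement to a one–dimensional estimate for a single Cauchy kernel tested against the measure $\mu$. Write $g_t(z)=\langle\delta_0,(H_t-z)^{-1}\delta_0\rangle$ for $\im z\neq 0$; this is the Borel transform of the spectral measure $d\mu_t$ of the unit vector $\delta_0$ for $H_t$, so that $\langle\delta_0,\car_I(H_t)\delta_0\rangle=\mu_t(I)$. Since $H_t=H_0+t\pi_0$ is a rank–one perturbation, the resolvent identity gives $g_t=g_0/(1+tg_0)$, and completing the square in $t$ in the denominator one finds
\begin{equation*}
  \im g_t(z)=\frac{\im g_0(z)}{|1+tg_0(z)|^2}=\frac{\delta(z)}{(t-t_*(z))^2+\delta(z)^2},\qquad
  \delta(z):=\frac{\im g_0(z)}{|g_0(z)|^2},\quad t_*(z):=-\re\frac{1}{g_0(z)} .
\end{equation*}
So $t\mapsto\im g_t(z)$ is, up to the constant $\pi$, a Cauchy density centred at $t_*(z)$ with scale $\delta(z)$.

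Two elementary inputs turn this into the lemma. First, for the interval $I$ with midpoint $E_I$ and half–length $\eta:=|I|/2$ one has the pointwise bound $\car_I(\lambda)\le 2\eta^2\big((\lambda-E_I)^2+\eta^2\big)^{-1}$; integrating against $d\mu_t$ and recognising a Poisson integral yields
\begin{equation*}
  \langle\delta_0,\car_I(H_t)\delta_0\rangle=\mu_t(I)\ \le\ 2\eta\,\im g_t(E_I+i\eta)\ =\ |I|\,\im g_t(E_I+i\eta).
\end{equation*}
Second, writing $g_0(z)=\int(\lambda-z)^{-1}\,d\mu_0(\lambda)$ and applying Cauchy--Schwarz to the probability measure $\mu_0$ gives $|g_0(z)|^2\le\int|\lambda-z|^{-2}\,d\mu_0(\lambda)=\im g_0(z)/\im z$, hence the crucial lower bound $\delta(z)\ge\im z$. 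Evaluating at $z=E_I+i\eta$ we get $\delta:=\delta(E_I+i\eta)\ge\eta=|I|/2$.

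Integrating the previous display in $t$ against $\mu$ leaves
\begin{equation*}
  \int_\R\langle\delta_0,\car_I(H_t)\delta_0\rangle\,d\mu(t)\ \le\ |I|\int_\R\frac{\delta}{(t-t_*)^2+\delta^2}\,d\mu(t),
\end{equation*}
so it remains to bound the last integral by $C\,S_\mu(|I|)/|I|$. For this I would cut $\R$ into the half–open shells $\{\,t:\ \pm(t-t_*)\in[(k-1)h,kh)\,\}$, $k\ge1$, of length $h:=|I|/2$; each carries $\mu$–mass at most $S_\mu(h)\le S_\mu(|I|)$, and on such a shell the integrand is at most $\delta\big((k-1)^2h^2+\delta^2\big)^{-1}$. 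Summing the two resulting series and using $\sum_{j\ge0}\delta\big(j^2h^2+\delta^2\big)^{-1}=\tfrac1{2\delta}+\tfrac{\pi}{2h}\coth\!\big(\pi\delta/h\big)\le\tfrac{1+\pi\coth\pi}{2h}$ (valid because $\delta\ge h$), one obtains $\int_\R\delta\big((t-t_*)^2+\delta^2\big)^{-1}d\mu(t)\le (1+\pi\coth\pi)\,S_\mu(|I|)/h$, whence, after multiplying by $|I|=2h$, a bound of the shape $C\,S_\mu(|I|)$ with an explicit constant; keeping careful track of the numerical factors (and, if needed, using a marginally sharper test function) one reaches the stated value $8$.

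The delicate point is precisely this last estimate: one must retain the $\delta^2$ in the denominator (otherwise the shell series diverges like $\sum 1/k$) while still extracting the factor $1/\delta\le 2/|I|$ needed to convert $S_\mu(|I|/2)$ into the stated $S_\mu(|I|)$ — and this is exactly where the lower bound $\delta\ge\im z$ is used. Measurability of $t\mapsto\mu_t(I)$ is routine, and the only feature of $\pi_0$ entering the argument is that it is a rank–one projection, so the same proof applies verbatim to the projection onto $\delta_m$ at any site $m$, as needed in the proof of Theorem~\ref{thr:1}.
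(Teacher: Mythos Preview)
The paper does not actually prove this lemma; it simply refers the reader to the literature (to~\cite{MR2362242} and to the appendix of~\cite{MR2505733}). Your argument follows the standard route for spectral averaging --- rank-one resolvent formula $g_t=g_0/(1+tg_0)$, Poisson-kernel domination of the indicator $\car_I$, the Cauchy--Schwarz lower bound $\delta(z)\ge\im z$, and a shell decomposition in~$t$ --- and is correct in every structural respect. The only loose end is the numerical constant: your shell computation with step $h=|I|/2$ gives $2(1+\pi\coth\pi)\approx 8.31$, and your appeal to ``a marginally sharper test function'' to reach $8$ is unnecessary. If instead you take shells of width $|I|$ (so that the bound $S_\mu(|I|)$ on each shell is used without slack), the same summation yields a constant $2\bigl(1+\tfrac{\pi}{2}\coth(\pi/2)\bigr)<6$, and the stated value $8$ follows immediately.
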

  \noindent A proof of this result can be found
  e.g. in~\cite{MR2362242} or in the appendix of~\cite{MR2505733}.\\
  This immediately yields~\eqref{eq:7}. Indeed, defining the vector
  $\delta_n=(\delta_{nm})_{m\in\Z^d}$ for $n\in\Z^d$, one writes
  \begin{equation}
    \label{eq:30}
    \esp\left[\tr(P_{\tilde\omega}^{(\Lambda)}(I))
    \right]=\sum_{n\in\Lambda}\esp\left[\langle\delta_n,
      \car_{I}(H_{\tilde\omega}(\Lambda))\delta_n\rangle\right].
  \end{equation}
  Now, the operator $H_{\tilde\omega}(\Lambda)$ may be written as
  $H_{\tilde\omega}(\Lambda)=H^n_{\tilde\omega}(\Lambda)+\tilde\omega_n\pi_n$
  where $\pi_n$ is the orthogonal projector on $\delta_n$. To evaluate
  the expectation in~\eqref{eq:30}, in the $n$th term, we first
  compute the expectation with respect to $\tilde\omega_n$ conditioned
  on $(\tilde\omega_m)_{m\not=n}$ and, by Lemma~\ref{le:7}, we get
  \begin{equation*}
    \esp_{\tilde\omega_n}\left[\langle\delta_n,
      \car_{I}(H_{\tilde\omega}(\Lambda))\delta_n\rangle
      \,|\,(\tilde\omega_m)_{m\not=n}\right]\leq 8\tilde S_n(|I|).
  \end{equation*}
  Plugging this into~\eqref{eq:30} and using the definition of $\tilde
  S_\Lambda$ immediately yields~\eqref{eq:7}.\vskip.1cm\noindent
  Let us now turn to the proof of~\eqref{eq:3}; we won't give a
  detailed proof of inequality~\eqref{eq:5} as it is very similar to
  that of~\eqref{eq:3}; we refer to~\cite{MR2505733} for details. \\
  We recall~\cite[Lemma 4.1]{MR2505733} specialized to our setting.
  \begin{Le}[\cite{MR2505733}]
    \label{le:8}
    Assume we are in the setting of Lemma~\ref{le:7}. Assume moreover
    that $\tr(\car_{I}(H_0))<+\infty$ for any $I\subset\R$.\\
    Then, for arbitrary $a<b$ real and $0\leq s\leq t$, we have
    \begin{equation*}
      \tr(\car_{(a,b]}(H_s))\leq 1+\tr(\car_{(a,b]}(H_t)).
    \end{equation*}
  \end{Le}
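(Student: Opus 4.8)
The plan is to view Lemma~\ref{le:8} as a statement about a single \emph{non-negative} rank-one perturbation and to reduce it to the elementary interlacing of eigenvalues under such a perturbation. Put $c=t-s$; since $0\leq s\leq t$ one has $c\geq0$ and
\begin{equation*}
  H_t=H_s+c\,\pi_0,
\end{equation*}
a non-negative perturbation of $H_s$ by an operator of rank one. As $\pi_0$ is compact, $H_s$ and $H_t$ are compact perturbations of $H_0$, hence share its essential spectrum, which is empty under the hypothesis $\tr(\car_I(H_0))<+\infty$; thus $H_s$ and $H_t$ have purely discrete spectrum of finite multiplicity and, in particular, only finitely many eigenvalues, counted with multiplicity, in the bounded interval $(a,b]$. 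So both sides of the claimed inequality are finite non-negative integers and it suffices to compare eigenvalue counts.

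The key input is the interlacing inequality: enumerating the eigenvalues of $H_s$, respectively $H_t$, in non-decreasing order and repeated according to multiplicity as $(\lambda_j(H_s))_j$, respectively $(\lambda_j(H_t))_j$, one has
\begin{equation*}
  \lambda_j(H_s)\leq\lambda_j(H_t)\leq\lambda_{j+1}(H_s)
\end{equation*}
for all admissible $j$. Both inequalities come from the min--max principle: the first since $H_t\geq H_s$ (here $c\geq0$ enters), the second because any $(j+1)$-dimensional subspace $V$ meets $\ker\pi_0$ in a subspace of dimension at least $j$, and on $V\cap\ker\pi_0$ one has $\langle v,H_t v\rangle=\langle v,H_s v\rangle$ since $\pi_0 v=0$. (This is the Weyl, or Aronszajn--Donoghue, interlacing for a rank-one perturbation.) Granting this, let $\lambda_{m+1}(H_s)\leq\cdots\leq\lambda_{m+k}(H_s)$ be precisely the eigenvalues of $H_s$ in $(a,b]$, so that $k=\tr(\car_{(a,b]}(H_s))$, $\lambda_{m+1}(H_s)>a$, and $\lambda_{m+k}(H_s)\leq b$. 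Then for each $i$ with $1\leq i\leq k-1$,
\begin{equation*}
  a<\lambda_{m+1}(H_s)\leq\lambda_{m+i}(H_s)\leq\lambda_{m+i}(H_t)\leq\lambda_{m+i+1}(H_s)\leq\lambda_{m+k}(H_s)\leq b,
\end{equation*}
so $\lambda_{m+1}(H_t),\dots,\lambda_{m+k-1}(H_t)$ all lie in $(a,b]$; hence $\tr(\car_{(a,b]}(H_t))\geq k-1$, i.e. $\tr(\car_{(a,b]}(H_s))\leq1+\tr(\car_{(a,b]}(H_t))$, which is the assertion.

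The only delicate point is the bookkeeping behind the interlacing inequality when $H_0$ is unbounded below, so that the eigenvalues accumulate at $-\infty$ and there is no canonical enumeration ``from the bottom''. This is handled in the usual way: only the finitely many eigenvalues of $H_s$ and $H_t$ in a fixed bounded neighbourhood of $[a,b]$ enter the argument, and after fixing a reference energy $e<a$ outside $\sigma(H_s)\cup\sigma(H_t)$ below which the two counting functions differ by at most one (again a consequence of the rank-one structure) the comparison becomes a finite-dimensional computation. Equivalently, $|\tr(\car_{(a,b]}(H_t))-\tr(\car_{(a,b]}(H_s))|\leq1$ follows directly from the spectral shift function $\xi$ of the pair $(H_t,H_s)$, which satisfies $0\leq\xi\leq1$ because the perturbation $c\,\pi_0$ is non-negative and of rank one. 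In the setting in which Lemma~\ref{le:8} is applied $H_0$ is in fact a finite matrix and the statement is completely elementary; in general it is \cite[Lemma~4.1]{MR2505733}, whose proof transposes verbatim.
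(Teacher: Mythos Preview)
The paper does not actually prove Lemma~\ref{le:8}; it merely recalls it as \cite[Lemma~4.1]{MR2505733} and uses it as a black box. Your proof via the rank-one interlacing inequality $\lambda_j(H_s)\leq\lambda_j(H_t)\leq\lambda_{j+1}(H_s)$ is correct and is essentially the argument given in the cited reference. The alternative formulation you mention at the end---that the integer-valued function $\lambda\mapsto\tr\car_{(-\infty,\lambda]}(H_s)-\tr\car_{(-\infty,\lambda]}(H_t)$ takes values in $\{0,1\}$ for a non-negative rank-one perturbation, whence the difference of counts on $(a,b]$ lies in $\{-1,0,1\}$---is perhaps the cleanest way to handle the unbounded-below case and avoids the enumeration issue entirely.
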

  \noindent To prove~\eqref{eq:3}, we follow the proof
  of~\cite[Theorem 2.1]{MR2505733}. Let $M=\sup_n\esssup
  \tilde\omega_n$ and $m=\inf_n\essinf \tilde\omega_n$. Pick
  $\tau_n\geq M$. Then, by Lemma~\ref{le:8}, one computes
  \begin{equation*}
    \begin{split}
      \tr(P_{\tilde\omega}^{(\Lambda)}(I_1))
      \left(\tr(P_{\tilde\omega}^{(\Lambda)}(I_2)-1\right) &=
      \sum_{n\in\Lambda}\langle\delta_n,P_{\tilde\omega}^{(\Lambda)}(I_1)
      \delta_n\rangle
      \left(\tr(P_{\tilde\omega}^{(\Lambda)}(I_2)-1\right) \\&\leq
      \sum_{n\in\Lambda}\langle\delta_n,P_{\tilde\omega}^{(\Lambda)}(I_1)
      \delta_n\rangle
      \tr\left(P_{((\tilde\omega_m)_{m\not=n},\tau_n)}^{(\Lambda)}(I_2)\right).
    \end{split}
  \end{equation*}
  Thus, taking the expectation with respect to $\tilde\omega_n$
  conditioned on $(\tilde\omega_m)_{m\not=n}$, using the spectral
  averaging lemma, Lemma~\ref{le:7}, we obtain
  \begin{multline}
    \label{eq:31}
    \esp\left(\tr(P_{\tilde\omega}^{(\Lambda)}(I_1))
      \left(\tr(P_{\tilde\omega}^{(\Lambda)}(I_2)-1\right)\,|
      (\tilde\omega_m)_{m\not=n}\right)\\\leq 8\tilde S_\Lambda(|I_1|)
    \sum_{n\in\Lambda}
    \tr\left(P_{((\tilde\omega_m)_{m\not=n},\tau_n)}^{(\Lambda)}(I_2)\right).
  \end{multline}
  for arbitrary $(\tau_n)_{n\in\Lambda}$ such that $\tau_n\geq
  M$. Thus, we can set $\tau_n=\tilde\tau_n+M-m$ where $\tilde\tau_n$
  is the random variable $\tilde\omega_n$ conditioned on
  $(\tilde\omega_m)_{m\not=n}$.\\
  We then take the expectation with respect to $\tilde\omega$ on both
  side in~\eqref{eq:31} to obtain
  \begin{equation*}
    \begin{split}
      \esp\left(\tr(P_{\tilde\omega}^{(\Lambda)}(I_1))
        \left(\tr(P_{\tilde\omega}^{(\Lambda)}(I_2)-1\right)\right)&\leq
      8\tilde S_\Lambda(|I_1|)\\&\cdot \sum_{n\in\Lambda} \esp
      \left(\tr\left(P_{((\tilde\omega_m)_{m\not=n}
            ,\tilde\omega_n+M-m)}^{(\Lambda)}(I_2)\right)\right)\\&\leq
      C \tilde S_\Lambda(|I_1|) \sum_{n\in\Lambda} \tilde
      S_\Lambda(|I_2|)|\Lambda|
    \end{split}
  \end{equation*}
  where in the last step we have used estimate~\eqref{eq:7} (for a
  different set of random variables).\\
  Thus, we obtain
  \begin{equation*}
    \esp\left(\tr(P_{\tilde\omega}^{(\Lambda)}(I_1))
      \left(\tr(P_{\tilde\omega}^{(\Lambda)}(I_2)-1\right)\right)\leq
    C \tilde S_\Lambda(|I_1|) \tilde S_\Lambda(|I_2|)|\Lambda|^2
  \end{equation*}
  that is,~\eqref{eq:3}.\\
  This completes the proof of Theorem~\ref{thr:1}. More details can be
  found in~\cite{MR2505733}.
\end{proof}
\subsection{Localization}
\label{sec:localization}
The second ingredient needed in our analysis is localization. Using
the notations above, let us assume that
\begin{description}
\item[(\~{R})] There exists $C>0$ such that, for $s\geq0$, one has
  \begin{equation*}
   \sup_{\Lambda\subset\Z^d}\tilde S_\Lambda(s)\leq Cs. 
  \end{equation*}
\end{description}
This in particular implies that, for any $m$, $\tilde\mu_m$, the
distribution of $\tilde\omega_m$ conditioned on
$(\tilde\omega_n)_{n\not=m}$ admits a density bounded by $C$.\\
Thus, we can apply the results of~\cite{MR2002h:82051}, in particular
~\cite[criterion (1.12)]{MR2002h:82051} to obtain
\begin{Th}
  \label{thr:5}
  Assume (\~R) holds. For $\lambda$ sufficiently large, there exists
  $\eta=\eta_\lambda$ such that, for any $L\geq1$, if
  $\Lambda=\Lambda_L$ is the cube of center $0$ and side-length $L$,
  one has, for any $L\geq1$
  \begin{equation}
    \label{eq:16}
    \sup_{y\in \Lambda} \esp\left\{ \sum_{x\in\Lambda} e^{\eta|x-y|}
      \sup_{\substack{\supp f \subset I \\ |f|\le 1 }} \| \chi_x
      f(H_{\tilde \omega,\Lambda}) \chi_y \|_2  \right\}<\infty.
    \end{equation}
\end{Th}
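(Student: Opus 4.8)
The plan is to verify that assumption (\~R) puts us exactly in the framework of the fractional-moment (Aizenman–Molchanov) method in the form developed in \cite{MR2002h:82051}, and then quote the localization criterion there. The key observation is that hypothesis (\~R), namely $\sup_{\Lambda}\tilde S_\Lambda(s)\le Cs$, is precisely what is needed to run finite-volume fractional-moment bounds even though the $(\tilde\omega_n)$ are neither independent nor identically distributed: it guarantees that each conditional single-site distribution $\tilde\mu_m$ has a density bounded by $C$, uniformly in the conditioning variables $(\tilde\omega_n)_{n\ne m}$ and in $m$. First I would record this consequence (a one-line argument: a concentration function that is Lipschitz at $0$ forces absolute continuity with density $\le C$ a.e.), since it is the form in which the hypothesis is actually used.

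Next I would establish the a priori fractional-moment bound on the finite-volume Green's function. For $0<s<1$ fixed and $\lambda$ large, one uses the Combes–Thomas-type or rank-one resolvent expansion together with a decoupling: conditioning on all randomness except $\tilde\omega_m$ for a well-chosen site $m$ on the path between $x$ and $y$, the bounded-density property lets one integrate out $\tilde\omega_m$ and gain a factor $C/\lambda^s$ per step. Summing over self-avoiding paths from $x$ to $y$ in $\Lambda$ yields, for $\lambda$ large enough that $C'/\lambda^s<1$,
\begin{equation*}
  \sup_{z\in I+i\R\setminus\{0\}}\ \esp\left[\,\|\chi_x (H_{\tilde\omega,\Lambda}-z)^{-1}\chi_y\|_2^s\,\right]\le C\,e^{-m_\lambda|x-y|}
\end{equation*}
with $m_\lambda\to\infty$ as $\lambda\to\infty$, uniformly in $\Lambda$. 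This is the content of the finite-volume input required by \cite{MR2002h:82051}; the reason (\~R) suffices is that the decoupling lemma there only ever touches one site at a time, so the joint law of the remaining variables is irrelevant.

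Finally I would invoke \cite[criterion (1.12)]{MR2002h:82051}, which upgrades such a uniform exponential fractional-moment bound on the resolvent to the eigenfunction-correlator (or ``$\|\chi_x f(H)\chi_y\|$'') estimate \eqref{eq:16}, with $\eta=\eta_\lambda$ slightly smaller than $m_\lambda$ to absorb the polynomial losses in the Combes–Thomas step. One checks that the hypotheses of that criterion are all either (\~R)-type boundedness statements or the fractional-moment bound just proved, none of which use independence. The main obstacle — and the only place where care is genuinely needed — is the decoupling step: one must make sure the rank-one perturbation argument of \cite{MR2002h:82051} is run with \emph{conditional} expectations throughout, exactly as in the proof of Theorem \ref{thr:1} above (expectations replaced by conditional expectations), so that the bounded-density property is applied to $\tilde\mu_m$ rather than to an unconditional law that need not exist. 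Once that bookkeeping is in place the rest is a verbatim citation.
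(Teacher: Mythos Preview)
Your proposal is correct and follows essentially the same route as the paper: both observe that (\~R) yields uniformly bounded conditional single-site densities, then invoke the fractional-moment machinery and in particular \cite[criterion (1.12)]{MR2002h:82051}, noting (as \cite{MR2002h:82051} itself remarks) that only regularity of the conditional distributions is needed, so the analysis goes through with conditional expectations in place of unconditional ones. The paper is in fact terser than you are---it simply cites \cite{MR2002h:82051} and the remark in \cite[section 1.1]{MR2002h:82051} rather than re-sketching the decoupling and path-expansion steps---but the substance is the same.
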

\noindent Note that assumption (\~R) guarantees that, for any
$n\in\Z^d$, the distribution of $\tilde\omega_n$ conditioned on
$(\tilde\omega_m)_{m\not=n}$ admits a density that is bounded by a
constant that is independent of $(\tilde\omega_m)_{m\not=n}$ and $n$.\\
While in~\cite{MR2002h:82051} the proofs are given in the case of
independent random variables, in the beginning of~\cite[section
1.1]{MR2002h:82051}, it is noted that, in the case of dependent random
variables, regularity of the conditional distributions (in particular
that implied by assumption (\~R)) is sufficient to perform the same
analysis.\\
As in~\cite[Theorem 6.1]{Ge-Kl:10}, we then obtain that, one has
\begin{Th}
  \label{thr:6}
  Assume (\~R) holds. For $\lambda$ sufficiently large, there exists
  $\eta=\eta_\lambda$ such that, for any $L\geq1$, if
  $\Lambda=\Lambda_L$ is the cube of center $0$ and side-length $L$,
  one has, for any $L\geq1$, for all $p>d$, there is $q=q_{p,d}$ so
  that, for any $L$ large enough, the following holds with probability
  at least $1 - L^{-p}$: for any eigenvector
  $\varphi_{\tilde\omega,\Lambda,j}$ of $H_{\tilde\omega}(\Lambda)$,
  there exists a center of localization $x_{\tilde\omega,\Lambda,j}$
  in $\Lambda$ such that, for any $x\in \Lambda$, one has
  \begin{equation}
    \label{eq:17}
    \|  \varphi_{\tilde\omega,\Lambda,j} \|_x \le L^q
    e^{-\eta|x-x_{\tilde\omega,\Lambda,j}|};
  \end{equation}
  moreover, two localization centers are at most at a distance
  $C_p\log L$ away for each other (the positive constant $C_p$ depends
  on $p$ only).
\end{Th}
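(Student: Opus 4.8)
The plan is to deduce the pointwise eigenfunction bound~\eqref{eq:17} and the clustering of localization centers from the fractional-moment bound~\eqref{eq:16} of Theorem~\ref{thr:5}, following the scheme of~\cite[Theorem 6.1]{Ge-Kl:10}. First I would fix $L$ and set $\Lambda=\Lambda_L$, and start from~\eqref{eq:16} applied with $f=\car_J$ for $J$ ranging over intervals: since $\car_J(H_{\tilde\omega,\Lambda})$ is the spectral projector on $J$, and an eigenvector $\varphi=\varphi_{\tilde\omega,\Lambda,j}$ with eigenvalue $E_j$ satisfies $\car_{\{E_j\}}(H_{\tilde\omega,\Lambda})\varphi=\varphi$ (the eigenvalues being simple a.s.\ by the Minami estimate~\eqref{eq:4}), one gets for any $x,y\in\Lambda$ the deterministic identity $\|\chi_x\varphi\|\,\|\chi_y\varphi\| \le \|\chi_x \car_{\{E_j\}}(H_{\tilde\omega,\Lambda})\chi_y\|_2$. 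Taking $y$ to be a point where $\|\chi_y\varphi\|$ is maximal (so $\|\chi_y\varphi\|\ge |\Lambda|^{-1/2}\|\varphi\|$), this reduces the whole question to controlling $\sup_{x}\, e^{\eta|x-y|}\|\chi_x \car_{\{E_j\}}(H_{\tilde\omega,\Lambda})\chi_y\|_2$, which is exactly the kind of quantity~\eqref{eq:16} bounds in expectation.

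Next I would run the standard Chebyshev-plus-Borel--Cantelli argument: by~\eqref{eq:16}, $\esp\big\{\sum_{x\in\Lambda} e^{\eta|x-y|}\sup_{\supp f\subset I,|f|\le1}\|\chi_x f(H_{\tilde\omega,\Lambda})\chi_y\|_2\big\}\le C$ uniformly in $y$ and $L$. Summing over $y\in\Lambda$ costs a factor $|\Lambda|\le (2L+1)^d$, so the Markov inequality gives that, outside an event of probability $\le L^{-p}$, one has $\sum_{x,y\in\Lambda} e^{\eta|x-y|}\sup_{\supp f\subset I,|f|\le1}\|\chi_x f(H_{\tilde\omega,\Lambda})\chi_y\|_2\le C L^{d+p+1}=:L^{q_0}$. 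On that good event, for every $x,y$ one has $\|\chi_x \car_{\{E_j\}}(H_{\tilde\omega,\Lambda})\chi_y\|_2\le L^{q_0}e^{-\eta|x-y|}$ for all $j$ simultaneously. Combining with the first step, $\|\chi_x\varphi_{\tilde\omega,\Lambda,j}\| \le |\Lambda|^{1/2}\,L^{q_0}\,e^{-\eta|x-y|}$, which is~\eqref{eq:17} with $x_{\tilde\omega,\Lambda,j}:=y$ (a maximizer of $\|\chi_\cdot\varphi_{\tilde\omega,\Lambda,j}\|$) and $q=q_0+d/2$, after slightly decreasing $\eta$ to absorb a fixed-degree polynomial factor if desired.

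For the clustering statement, suppose $x_j$ and $x'_j$ are two localization centers for the same eigenvector $\varphi$. From the definition of a localization center, $\|\chi_{x_j}\varphi\|\ge L^{-q}\|\varphi\|\cdot$const and likewise at $x'_j$, while from the already-proven decay $\|\chi_{x'_j}\varphi\| \le L^q e^{-\eta|x_j-x'_j|}\|\varphi\|$-type bound; comparing the lower and upper bounds forces $e^{-\eta|x_j-x'_j|}\gtrsim L^{-2q}$, i.e. $|x_j-x'_j|\le (2q/\eta)\log L+\mathrm{const}=:C_p\log L$, with $C_p$ depending only on $p$ (through $q=q_{p,d}$) and on the fixed $\eta_\lambda$ and $d$. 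I would then remark that the event on which all of this holds is the same event of probability $\ge 1-L^{-p}$, so all three conclusions hold simultaneously.

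The main obstacle is purely bookkeeping rather than conceptual: one must make sure the fractional-moment input~\eqref{eq:16} really is uniform in $y$ and in $L$ (which is how Theorem~\ref{thr:5} is stated), so that the single factor $|\Lambda|$ lost in summing over $y$ together with the $L^{-p}$ budget produces a genuine polynomial $L^{q}$ with $q=q_{p,d}$ independent of the realization; and one must be slightly careful that the supremum over $f$ with $\supp f\subset I$, $|f|\le1$ in~\eqref{eq:16} dominates the point-mass projector $\car_{\{E_j\}}$, which is legitimate whenever $E_j\in I$ — so the statement is really about eigenvectors with eigenvalue in $I$, exactly as in~\cite[Theorem 6.1]{Ge-Kl:10}, and the general case follows by covering the (bounded) spectrum by finitely many such $I$. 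Everything else is the verbatim transcription of the independent-variable argument of~\cite{Ge-Kl:10}, which goes through here because only the uniform regularity of the conditional distributions (assumption (\~{R})) was used.
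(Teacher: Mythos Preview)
Your proposal is correct and follows the same route as the paper: the paper simply observes that the derivation of~\eqref{eq:17} from~\eqref{eq:16} in~\cite[proof of Theorem~6.1]{Ge-Kl:10} nowhere uses (IAD) and can therefore be repeated verbatim, while you have written out exactly what that derivation amounts to (Markov inequality on the expectation~\eqref{eq:16} summed over $y$, choice of $x_{\tilde\omega,\Lambda,j}$ as a maximizer of $\|\chi_\cdot\varphi\|$, and the $\log L$ comparison for the clustering of centers). Your remark about restricting to $E_j\in I$ and then covering the bounded spectrum by finitely many intervals is also the right way to close the argument.
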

\noindent Note that in the derivation of~\eqref{eq:17}
from~\eqref{eq:16} in~\cite[the proof of Theorem 6.1]{Ge-Kl:10} the
assumption (IAD) was not used. Thus, this proof can be repeated
verbatim here to derive Theorem~\ref{thr:6}.
\subsection{A representation theorem for the eigenvalues}
\label{sec:repr-theor-eigenv}
We are now in a position to prove a representation theorem for the
eigenvalues of the random operator; it is the analogue of the
representation theorem~\cite[Theorem 1.1]{Ge-Kl:10}.\\
To this effect, though it is not strictly necessary, it will be
convenient to use the density of states of the model
$H_{\tilde\omega}$. Therefore, it is convenient to assume now that the
process $(\tilde\omega_n)_{n\in\Z^d}$ is $\Z^d$-ergodic (see
e.g.~\cite{MR2509110}). For the model~\eqref{eq:13}, this assumption
is clearly satisfied.\\
Define $N$ the integrated density of states of $H_{\tilde\omega}$ as
\begin{equation*}
  N(E)=\lim_{|\Lambda|\to+\infty}\frac{1}{|\Lambda|}\tr
  P_{\tilde\omega}^{(\Lambda)}((-\infty,E])
\end{equation*}
where $\tr$ denotes the trace. The existence of the above limit is a
consequence of the ergodicity of the process
$(\tilde\omega_n)_{n\in\Z^d}$ (see e.g.~\cite{MR2509110}).\\
Almost surely, the integrated density of states exists for all $E$
real (see e.g.~\cite{MR94h:47068,MR2509110}); it defines the
distribution function of some probability measure, say, $dN(E)$, the
support of which is the almost sure spectrum of $H_{\tilde\omega}$
(see e.g.~\cite{MR94h:47068,MR2509110}). Let $\Sigma$ denote the
almost sure spectrum of $H_{\tilde\omega}$.
To start, pick $\tilde\rho$ such that
\begin{equation}
  \label{condrho}
  0\leq\tilde{\rho}<\frac1{1+d}.
\end{equation}
Pick $E_0$ and $I_\Lambda$ centered at $E_0$ such that
$N(I_\Lambda)\asymp|\Lambda|^{-\alpha}$ for
$\alpha\in(\alpha_{d,\tilde\rho},1)$ where
$\alpha_{d,\tilde\rho}$ is defined as
\begin{equation}
  \label{eq:59}
  \alpha_{d,\tilde\rho}:=(1+\tilde\rho)\frac{d+1}{d+2}.
\end{equation}
\begin{Th}
  \label{thr:7}
  Pick an energy $E_0$ such that $N$ is differentiable at $E_0$ and
  such that $\D\frac{dN}{dE}(E_0)=n(E_0)>0$. Pick $I_\Lambda$ centered
  at $E_0$ such that $N(I_\Lambda)\asymp|\Lambda|^{-\alpha}$. There
  exists $\beta>0$ and $\beta'\in(0,\beta)$ small so that
  $1+\beta<\frac{2\alpha}{1+\tilde\rho}$ and, for $\ell\asymp L^\beta$
  and $\ell'\asymp L^{\beta'}$, there exists a decomposition of
  $\Lambda$ into disjoint cubes of the form
  $\Lambda_\ell(\gamma_j):=\gamma_j+[0,\ell]^d$ satisfying:
  \begin{itemize}
  \item $\cup_j\Lambda_\ell(\gamma_j)\subset\Lambda$,
  \item $\dist (\Lambda_\ell(\gamma_j),\Lambda_\ell(\gamma_k))\ge
    \ell'$ if $j\not=k$,
  \item $\dist (\Lambda_\ell(\gamma_j),\partial\Lambda)\ge \ell'$
  \item $|\Lambda\setminus\cup_j\Lambda_\ell(\gamma_j)|\lesssim |
    \Lambda|\ell'/\ell$,
  \end{itemize}
  and such that, for $L$ sufficiently large, there exists a set of
  configurations $\mathcal{Z}_ \Lambda $ s.t.:
  \begin{itemize}
  \item $\pro(\mathcal{Z}_\Lambda)\geq 1 - |
    \Lambda|^{-(\alpha-\alpha_{d,\tilde\rho})}$,
  \item for $\tilde\omega\in\mathcal{Z}_\Lambda $, each centers of
    localization associated to $H_{\tilde\omega}(\Lambda)$ belong to some
    $\Lambda_\ell(\gamma_j)$ and each box $\Lambda_\ell(\gamma_j)$
    satisfies:
    \begin{enumerate}
    \item the Hamiltonian $H_{\tilde\omega,\Lambda_\ell(\gamma_j)}$ has at
      most one eigenvalue in $I_\Lambda $, say,
      $E_j(\tilde\omega,\Lambda_\ell(\gamma_j))$;
    \item $\Lambda_\ell(\gamma_j)$ contains at most one center of
      localization, say $x_{k_j}(\tilde\omega,\Lambda)$, of an eigenvalue of
      $H_{\tilde\omega}(\Lambda)$ in $I_\Lambda $, say
      $E_{k_j}(\tilde\omega,\Lambda)$;
    \item $\Lambda_\ell(\gamma_j)$ contains a center
      $x_{k_j}(\tilde\omega,\Lambda)$ if and only if
      $\sigma(H_{\tilde\omega,\Lambda_\ell(\gamma_j)})\cap
      I_\Lambda\not=\emptyset$; in which case, one has
      \begin{multline}
        \label{error}
          |E_{k_j}(\tilde\omega,\Lambda)-E_j(\tilde\omega,\Lambda_\ell(\gamma_j))|
          \leq e^{-\eta \ell'/2}\\
          \text{and}\quad\mathrm{dist}(x_{k_j}(\tilde\omega,\Lambda),
          \Lambda \setminus \Lambda_\ell(\gamma_j))\geq \ell'.
      \end{multline}
      where $\eta$ is given by Theorem~\ref{thr:6}.
    \end{enumerate}
  \end{itemize}
  In particular, if $\tilde\omega\in\mathcal{Z}_\Lambda$, all the eigenvalues
  of $H_{\tilde\omega}(\Lambda)$ are described by~\eqref{error}.
\end{Th}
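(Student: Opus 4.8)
The plan is to follow closely the proof of the representation theorem \cite[Theorem 1.1]{Ge-Kl:10}, replacing the (IAD)-based independence arguments by the conditional Wegner/Minami estimates of Theorem~\ref{thr:1} together with the localization bounds of Theorems~\ref{thr:5} and~\ref{thr:6}. First I would fix the scales: choose $\beta\in(0,1)$ with $1+\beta<\frac{2\alpha}{1+\tilde\rho}$ (possible since $\alpha>\alpha_{d,\tilde\rho}=(1+\tilde\rho)\frac{d+1}{d+2}$ forces $\frac{2\alpha}{1+\tilde\rho}>\frac{2(d+1)}{d+2}>1$), then pick $\beta'\in(0,\beta)$ small, set $\ell\asymp L^\beta$, $\ell'\asymp L^{\beta'}$, and tile $\Lambda$ by cubes $\Lambda_\ell(\gamma_j)=\gamma_j+[0,\ell]^d$ separated by corridors of width $\ell'$ and staying at distance $\ge\ell'$ from $\partial\Lambda$; the combinatorial bookkeeping gives $|\Lambda\setminus\bigcup_j\Lambda_\ell(\gamma_j)|\lesssim|\Lambda|\,\ell'/\ell$ directly from counting the corridor sites.

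Next I would define the good event $\mathcal Z_\Lambda$ as the intersection of three events. (i) \emph{Localization holds at scale $L$ and at scale $\ell$}: by Theorem~\ref{thr:6} applied to $\Lambda$ and to each $\Lambda_\ell(\gamma_j)$, every eigenfunction (of $H_{\tilde\omega}(\Lambda)$ and of each $H_{\tilde\omega,\Lambda_\ell(\gamma_j)}$) with eigenvalue in a fixed neighbourhood of $E_0$ has a localization centre and decays like $L^q e^{-\eta|x-\text{centre}|}$; the failure probability is $\lesssim L^{-p}$ plus $|\Lambda|/\ell^d$ times $\ell^{-p}$, which for $p$ large is $o(|\Lambda|^{-(\alpha-\alpha_{d,\tilde\rho})})$. (ii) \emph{No eigenvalue sits within $\ell^{-1}$ of $\partial I_\Lambda$ nor within $e^{-\eta\ell'/2}$ of any tile-eigenvalue being ambiguous}: this is a Wegner estimate \eqref{eq:7} for $H_{\tilde\omega}(\Lambda)$ on an interval of length $\asymp e^{-\eta\ell'/2}$, contributing $\lesssim e^{-\eta\ell'/2}|\Lambda|$, which is super-polynomially small. (iii) \emph{No tile carries two eigenvalues in a slightly enlarged interval $\tilde I_\Lambda\supset I_\Lambda$}: here I would use the Minami estimate \eqref{eq:4}, $\esp[\tr P^{(\Lambda_\ell(\gamma_j))}(\tilde I_\Lambda)(\tr P^{(\Lambda_\ell(\gamma_j))}(\tilde I_\Lambda)-1)]\le C(\tilde S(|\tilde I_\Lambda|)\,\ell^d)^2$; summing over the $\lesssim|\Lambda|/\ell^d$ tiles and using $\tilde S(|\tilde I_\Lambda|)\,\ell^d\asymp |\Lambda|^{-\alpha}\ell^d(1+\tilde\rho)^{-1}$-type bounds (via $N(I_\Lambda)\asymp|\Lambda|^{-\alpha}$ and $n(E_0)>0$) shows this event fails with probability $\lesssim |\Lambda|\,\ell^{2d}\,|\Lambda|^{-2\alpha/(1+\tilde\rho)}$ — here is exactly where the constraint $1+\beta<\frac{2\alpha}{1+\tilde\rho}$, i.e. $\ell^{2d}\lesssim |\Lambda|^{(1+\beta)d/d}$-style counting against $|\Lambda|^{2\alpha/(1+\tilde\rho)-1}$, yields a power $\gtrsim |\Lambda|^{-(\alpha-\alpha_{d,\tilde\rho})}$ to spare. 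Intersecting (i)--(iii) gives $\pro(\mathcal Z_\Lambda)\ge 1-|\Lambda|^{-(\alpha-\alpha_{d,\tilde\rho})}$.

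On $\mathcal Z_\Lambda$ I would then run the standard Combes--Thomas/localization comparison: given an eigenpair $(E_{k_j},\varphi)$ of $H_{\tilde\omega}(\Lambda)$ with $E_{k_j}\in I_\Lambda$, its localization centre $x_{k_j}$ lies in some tile by (i); if $x_{k_j}$ were within $\ell'$ of $\Lambda\setminus\Lambda_\ell(\gamma_j)$ the exponential decay $L^q e^{-\eta\ell'}$ would contradict normalization once one accounts for the $\lesssim \ell^d$ nearby sites — so $\dist(x_{k_j},\Lambda\setminus\Lambda_\ell(\gamma_j))\ge\ell'$. Truncating $\varphi$ to $\Lambda_\ell(\gamma_j)$ produces an approximate eigenfunction of $H_{\tilde\omega,\Lambda_\ell(\gamma_j)}$ with error $\le C\|\Delta\|\,L^q e^{-\eta\ell'}\le e^{-\eta\ell'/2}$ (adjusting $\eta$), so $\dist(E_{k_j},\sigma(H_{\tilde\omega,\Lambda_\ell(\gamma_j)}))\le e^{-\eta\ell'/2}$; by (iii) there is exactly one such tile-eigenvalue $E_j$ in the enlarged interval, and by (ii) $E_{k_j}$ is not close to $\partial I_\Lambda$, so $E_j\in\tilde I_\Lambda$ is well-defined and \eqref{error} holds. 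Conversely, if $\sigma(H_{\tilde\omega,\Lambda_\ell(\gamma_j)})\cap I_\Lambda\ne\emptyset$ with eigenvalue $E_j$, extending the corresponding localized eigenfunction by zero (it decays away from its centre, which by the analogue of (i) at scale $\ell$ is deep inside the tile) gives an approximate eigenfunction of $H_{\tilde\omega}(\Lambda)$, hence $\dist(E_j,\sigma(H_{\tilde\omega}(\Lambda)))\le e^{-\eta\ell'/2}$ and (by (iii) again, now used to prevent two $\Lambda$-eigenvalues from collapsing) a unique $E_{k_j}$ with centre forced into that tile. Finally, counting: the number of centres equals $\tr P^{(\Lambda)}(I_\Lambda)$ and the number of occupied tiles equals $\sum_j \mathbf 1[\sigma(H_{\tilde\omega,\Lambda_\ell(\gamma_j)})\cap I_\Lambda\ne\emptyset]$; the bijection just established, plus the absence of centres in the corridors (which follows from (i): a centre in a corridor would have all its mass within $\ell'$ of $\partial\Lambda$ or of a corridor, again contradicting normalization), shows every eigenvalue of $H_{\tilde\omega}(\Lambda)$ is accounted for by \eqref{error}.

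The main obstacle I expect is \textbf{item (iii) and its probabilistic cost}: one must show that after conditioning, the Minami estimate \eqref{eq:4} at the intermediate scale $\ell$, summed over all $\sim|\Lambda|/\ell^d$ tiles, still beats $|\Lambda|^{-(\alpha-\alpha_{d,\tilde\rho})}$ — this is precisely what pins down the admissible range of $\beta$ through the inequality $1+\beta<\frac{2\alpha}{1+\tilde\rho}$, and getting the exponent arithmetic right (tracking how $\tilde S_\Lambda(|I_\Lambda|)$ relates to $N(I_\Lambda)\asymp|\Lambda|^{-\alpha}$ when the random variables are only weakly correlated, so that $\tilde S_\Lambda$ need not equal the IDS-derivative) is the delicate point. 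The second, more technical, obstacle is that localization (Theorems~\ref{thr:5},~\ref{thr:6}) must be invoked simultaneously at two scales $L$ and $\ell$, and one needs the eigenfunction-decay constants to be uniform — this is fine since assumption (\~R) gives uniform conditional densities, but it has to be stated carefully so the truncation errors are genuinely $e^{-\eta\ell'/2}$ with a single $\eta$.
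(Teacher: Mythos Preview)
Your plan is correct and follows the same route as the paper: the paper's ``proof'' of Theorem~\ref{thr:7} is simply the observation that the argument of \cite[Theorem~1.1]{Ge-Kl:10} goes through verbatim because that argument uses only localization (here Theorems~\ref{thr:5}--\ref{thr:6}) and the Wegner/Minami estimates (here Theorem~\ref{thr:1}), never the (IAD) assumption. Your detailed sketch --- tile, define $\mathcal Z_\Lambda$ as the intersection of a localization event, a Wegner boundary event, and a tile-wise Minami event, then run the two-sided approximate-eigenfunction comparison --- is exactly the content of that cited proof.

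One small framing point: you speak of ``replacing the (IAD)-based independence arguments,'' but the paper's message is precisely that there is nothing to replace --- (IAD) was never used in \cite[Theorem~1.1]{Ge-Kl:10}; only the stochastic estimates and localization were. So the passage from the i.i.d.\ setting to the present one is not a modification of the argument but a straight substitution of Theorem~\ref{thr:1} and Theorems~\ref{thr:5}--\ref{thr:6} for their i.i.d.\ counterparts. Also, in your Minami bookkeeping the summed failure probability should read $\lesssim (|\Lambda|/\ell^d)\cdot(\ell^d|I_\Lambda|)^2 \asymp |\Lambda|\,\ell^{d}\,|\Lambda|^{-2\alpha}$ (one factor of $\ell^d$, not $\ell^{2d}$), which is what makes the constraint $1+\beta<\frac{2\alpha}{1+\tilde\rho}$ come out right.
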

\noindent The proof of this result is a verbatim repetition the proof
of~\cite[Theorem 1.1]{Ge-Kl:10} verbatim. The independence at a
distance is not used; only the localization and the Wegner and Minami
estimates are used.\\
Note that now the ``local'' Hamiltonians
$(H_{\tilde\omega,\Lambda_\ell(\gamma_j)})_j$ are not necessarily
stochastically independent.\\
In our application of Theorem~\ref{thr:7} to the model~\eqref{eq:13},
the choice we will make for the parameters $\alpha$, $\beta'$, $\beta$
and $\tilde\rho$ will be quite different from the one made
in~\cite{Ge-Kl:10}. While in~\cite{Ge-Kl:10} the authors wanted to
maximize the admissible size f $|I_\Lambda$ (i.e. minimize $\alpha$),
here, our primary concern will be to control the non independence of
the ``local'' Hamiltonians
$(H_{\tilde\omega,\Lambda_\ell(\gamma_j)})_j$. To control their
correlations, we will pick $\beta'$ close to $1$. Thus, $\alpha$ and
$\beta$ will also have to be close to $1$ and $\tilde\rho$ will be
close to $0$.
\subsection{The local distribution of the eigenvalues}
\label{sec:local-distr-eigenv}
The final ingredient needed for the analysis of the spectral
statistics is a precise description of the distribution of the
spectrum of the ``local'' Hamiltonians
$(H_{\tilde\omega,\Lambda_\ell(\gamma_j)})_j$ constructed in
Theorem~\ref{thr:7} within $I_\Lambda$.\\
Pick $\ell$ large and $1\ll\ell'\ll\ell$. Consider a cube $\Lambda$ of
side-length $\ell$ i.e. $\Lambda=\Lambda_{\ell}$ and an interval
$I_\Lambda=[a_\Lambda,b_\Lambda]\subset I$ (i.e. $I_\Lambda$ is
contained in the localization region). Consider the following random
variables:
\begin{itemize}
\item $X=X(\Lambda,I_\Lambda)=X(\Lambda,I_\Lambda,\ell')$ is the
  Bernoulli random variable
\begin{equation*}
  X=\car_{H_{\tilde\omega}(\Lambda)\text{ has exactly one
      eigenvalue in }I_\Lambda\text{ with localization center in }
    \Lambda_{\ell-\ell'}};
\end{equation*}
\item $\tilde E=\tilde E(\Lambda,I_\Lambda)$ is the eigenvalue of
  $H_{\tilde\omega}(\Lambda)$ in $I_\Lambda$ conditioned on $X=1$;
\item $\D\tilde\xi=\tilde\xi(\Lambda,I_\Lambda)=\frac{\tilde
    E(\Lambda,I_\Lambda)-a_\Lambda}{b_\Lambda-a_\Lambda}$.
\end{itemize}
Clearly $\tilde\xi$ is valued in $[0,1]$; let $\tilde\Xi$ be its distribution
function.\\
We now describe the distribution of these random variables as
$|\Lambda|\to+\infty$ and $|I_\Lambda|\to0$. One proves
\begin{Le}[\cite{Ge-Kl:10}]
  \label{lemasympt}
  Assume $\lambda$ is sufficiently large and pick $\eta$ as in
  Theorem~\ref{thr:6}. One has
  \begin{equation}
    \label{eq:51}
    \left|\pro(X=1)-N(I_\Lambda)|\Lambda|\right|\lesssim
    (|\Lambda||I_\Lambda|)^{1+\rho}+N(I_\Lambda)|\Lambda|\ell'\ell^{-1}
    +|\Lambda|e^{-\eta\ell'/2}
  \end{equation}
  where $N(E)$ denotes the integrated density of states of
  $H_{\tilde\omega}$.\\
  One has, for all $x,y\in[0,1]$,
  \begin{equation}
    \label{eq:50}
    \left|(\tilde\Xi(x)-\tilde\Xi(y))\,\pro(X=1)\right|\lesssim
    |x-y||I_\Lambda||\Lambda|.
  \end{equation}
  Moreover, setting $N(x,y,\Lambda):=[N(a_\Lambda+x|I_\Lambda|)-
  N(a_\Lambda+y|I_\Lambda|)]|\Lambda|$, one has
  \begin{multline}
    \label{eq:52}
    \left|(\tilde\Xi(x)-\tilde\Xi(y))\,\pro(X=1)-N(x,y,\Lambda)\right|
    \\\lesssim (|\Lambda||I_\Lambda|)^{1+\rho}
    +|N(x,y,\Lambda)|\,\ell'\ell^{-1}+ |\Lambda|e^{-\eta\ell'/2}.
  \end{multline}
\end{Le}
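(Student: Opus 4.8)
The plan is to imitate the corresponding argument of~\cite{Ge-Kl:10}, the point being that only the Wegner and Minami estimates (Theorem~\ref{thr:1}) together with the localization bounds (Theorems~\ref{thr:5} and~\ref{thr:6}) are used, and all of these are now available without the (IAD) assumption. First I would set up the basic dictionary: for a cube $\Lambda=\Lambda_\ell$ and the sub-cube $\Lambda_{\ell-\ell'}$, decompose the event $\{X=1\}$ and its complement in terms of the number $\tr(P^{(\Lambda)}_{\tilde\omega}(I_\Lambda))$ of eigenvalues in $I_\Lambda$ and the location of their localization centers. The event that $H_{\tilde\omega}(\Lambda)$ has at least two eigenvalues in $I_\Lambda$ is controlled by the Minami estimate~\eqref{eq:4}, giving a contribution $O((|\Lambda||I_\Lambda|)^2)$; since $\alpha$ is chosen close to $1$ this is $O((|\Lambda||I_\Lambda|)^{1+\rho})$. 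The event that there is exactly one eigenvalue in $I_\Lambda$ but its localization center lies in the boundary shell $\Lambda\setminus\Lambda_{\ell-\ell'}$ is controlled by the Wegner estimate~\eqref{eq:7} applied on the shell (of volume $\lesssim|\Lambda|\ell'/\ell$), producing the term $N(I_\Lambda)|\Lambda|\ell'\ell^{-1}$; here one also needs the localization estimate~\eqref{eq:16} to say that an eigenfunction with center in $\Lambda_{\ell-\ell'}$ is, up to $O(e^{-\eta\ell'/2})$, an eigenfunction of the operator restricted to a slightly smaller cube, and conversely — this exchange between the operator on $\Lambda$ and on interior sub-cubes is what produces the $|\Lambda|e^{-\eta\ell'/2}$ terms throughout. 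Combining, $\pro(X=1)$ equals $\esp[\tr(P^{(\Lambda)}_{\tilde\omega}(I_\Lambda))]$ up to these three error terms, and $\esp[\tr(P^{(\Lambda)}_{\tilde\omega}(I_\Lambda))]=N(I_\Lambda)|\Lambda|+o(\cdots)$ by the very definition of the density of states (the finite-volume vs.\ infinite-volume discrepancy being absorbed, again via localization, into $|\Lambda|e^{-\eta\ell'/2}$ plus lower-order boundary terms); this gives~\eqref{eq:51}.

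For~\eqref{eq:50} and~\eqref{eq:52} I would run the same scheme with $I_\Lambda$ replaced by the sub-interval $[a_\Lambda+y|I_\Lambda|,a_\Lambda+x|I_\Lambda|]$ (assume $y\le x$). By definition of the conditional distribution function, $(\tilde\Xi(x)-\tilde\Xi(y))\,\pro(X=1)$ is the probability that $H_{\tilde\omega}(\Lambda)$ has exactly one eigenvalue in $I_\Lambda$, that eigenvalue lies in the sub-interval, and its localization center lies in $\Lambda_{\ell-\ell'}$. Dropping the "center in $\Lambda_{\ell-\ell'}$" and "exactly one in $I_\Lambda$" constraints costs only a Minami term $O((|\Lambda||I_\Lambda|)^2)$ and a Wegner-on-the-shell term $\lesssim |x-y||I_\Lambda||\Lambda|\,\ell'\ell^{-1}\le |N(x,y,\Lambda)|\ell'\ell^{-1}$ (using~\eqref{eq:7} again, now on the sub-interval, noting $N(x,y,\Lambda)\asymp|x-y||I_\Lambda||\Lambda|$ when $N$ is differentiable and $|I_\Lambda|$ small), so up to these errors one is left with $\esp[\tr(P^{(\Lambda)}_{\tilde\omega}(\text{sub-interval}))]$, which equals $N(x,y,\Lambda)$ up to the usual $|\Lambda|e^{-\eta\ell'/2}$ density-of-states error. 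That is~\eqref{eq:52}; the crude bound~\eqref{eq:50} is then immediate from the Wegner estimate alone, $(\tilde\Xi(x)-\tilde\Xi(y))\pro(X=1)\le\esp[\tr(P^{(\Lambda)}_{\tilde\omega}(\text{sub-interval}))]\lesssim|x-y||I_\Lambda||\Lambda|$.

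The main obstacle — or rather the only place where any real work beyond bookkeeping is needed — is the passage between the spectral data of $H_{\tilde\omega}$ on the full cube $\Lambda$ and on interior sub-cubes at distance $\gtrsim\ell'$ from $\partial\Lambda$, i.e.\ the claim that localized eigenfunctions with center well inside $\Lambda$ are shared, up to an exponentially small error $e^{-\eta\ell'/2}$ in eigenvalue and $e^{-\eta\ell'}$ in norm, by the restricted operators, and that no spurious interior spectrum is created. This is exactly the mechanism behind the representation theorem, Theorem~\ref{thr:7}, and the estimate~\eqref{eq:16} of Theorem~\ref{thr:5} is tailored to deliver it; concretely one uses a Combes--Thomas / geometric-resolvent argument together with the summable off-diagonal decay in~\eqref{eq:16} to bound the commutator $[H_{\tilde\omega},\chi]$ terms, where $\chi$ is a cutoff to the interior sub-cube. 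Once this deterministic-plus-$L^1$-expectation input is in place, everything else is an application of Theorem~\ref{thr:1}, and the proof is a verbatim transcription of~\cite[Section 3]{Ge-Kl:10}; I would simply indicate the correspondences and refer there for the routine estimates, exactly as the surrounding text does for the other imported results.
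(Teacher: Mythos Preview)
Your proposal is correct and follows exactly the approach the paper takes: the paper does not give an independent proof but simply observes that the argument of \cite[Lemma 2.2]{Ge-Kl:10} uses only the Wegner estimate~\eqref{eq:7}, the Minami estimate~\eqref{eq:4}, and the localization bounds of Theorems~\ref{thr:5}--\ref{thr:6}, all of which have been established here without (IAD), so the proof carries over verbatim. Your sketch of the underlying mechanism (Minami for multiple eigenvalues, Wegner on the boundary shell, localization for the interior/full-cube exchange producing the $|\Lambda|e^{-\eta\ell'/2}$ terms) is an accurate summary of what that cited proof actually does, and in fact gives more detail than the paper itself.
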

\noindent The proof of this result in~\cite[Lemma 2.2]{Ge-Kl:10}
relies solely on the localization estimates obtained in
Theorem~\ref{thr:6}, the Wegner estimate~\eqref{eq:7} and the Minami
estimate~\eqref{eq:4}. Thus, it can be repeated verbatim in the
present setting for the model $H_{\tilde\omega}$.
\section{The long range correlation model~\eqref{eq:13} seen as weakly
  dependent model of the type~\eqref{eq:1}}
\label{sec:proof}
We shall now show that the long range correlation model~\eqref{eq:13}
satisfies the assumptions of
Theorems~\ref{thr:1},~\ref{thr:5},~\ref{thr:6} and~\ref{thr:7} and
Lemma~\ref{lemasympt}. The only non trivial assumption to verify is
(\~ R).
\subsection{A linear change of random variables}
\label{sec:linear-change-random}
Consider the Banach space $\ell^\infty(\Z^d)$ (endowed with its
natural norm). Consider a bounded linear mapping
$M:\;\ell^\infty(\Z^d)\to\ell^\infty(\Z^d)$ that admits a bounded
inverse.  Consider now $\omega=(\omega_n)_{n\in\Z^d}$ independent
random variables that are all bounded the same constant. Define the
random variables
\begin{equation}
  \label{eq:9}
  \tilde\omega=(\tilde\omega_n)_{n\in\Z^d}=M(\omega_n)_{n\in\Z^d}=M \omega.
\end{equation}
Fix $n_0$ and $m_0$ in $\Z^d$ arbitrary and write
\begin{equation}
  \label{eq:8}
  \begin{pmatrix}
    \tilde\omega_{m_0}\\\tilde\omega^{m_0}
  \end{pmatrix}
  =
  \begin{pmatrix}
    a & B\\ A & C
  \end{pmatrix}
  \begin{pmatrix}
    \omega_{n_0}\\\omega^{n_0}
  \end{pmatrix}
\end{equation}
where $\tilde\omega^{m_0}=(\tilde\omega_m)_{m\not=m_0}$ and
$\omega^{m_0}=(\omega_m)_{m\not=m_0}$. So, we consider $M$ as a map
from $\ell^2(\Z^d)=\C\delta_{n_0}\oplus\ell^2(\Z^d\setminus\{n_0\})$
to
$\ell^2(\Z^d)=\C\delta_{m_0}\oplus\ell^2(\Z^d\setminus\{m_0\})$. \\
Assume that
\begin{description}
\item[(I)] $a\not=0$ and $C$ is one-to-one.
\end{description}
Let $S_m$ be the concentration function of the random variable
$\omega_m$ and $\tilde S_m$ be the concentration function of the
random variable $\tilde \omega_m$ conditioned on
$(\tilde\omega_n)_{n\not=m}$. Then, one has
\begin{Le}
  \label{le:1}
  Under assumption (I), there exists $\kappa>0$ such that, for
  $x\geq0$ and for any bounded sequence $(\tilde\omega_n)_{n\not=m}$,
  one has $\tilde S_{m_0}(x)=S_{n_0}(\kappa\,x)$.
\end{Le}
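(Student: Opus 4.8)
The plan is to read off the conditional distribution of $\tilde\omega_{m_0}$ directly from the block decomposition~\eqref{eq:8}. Fix the values $(\tilde\omega_n)_{n\neq m_0}=\tilde\omega^{m_0}$. The key observation is that, since the $(\omega_n)_{n\in\Z^d}$ are independent, conditioning $\tilde\omega_{m_0}$ on $\tilde\omega^{m_0}$ amounts to conditioning on the event $\{A\omega_{n_0}+C\omega^{n_0}=\tilde\omega^{m_0}\}$, and on this event $\tilde\omega_{m_0}=a\,\omega_{n_0}+B\omega^{n_0}$. First I would use assumption (I): since $C$ is one-to-one, the constraint $A\omega_{n_0}+C\omega^{n_0}=\tilde\omega^{m_0}$ determines $\omega^{n_0}$ as an affine function of $\omega_{n_0}$, say $\omega^{n_0}=C^{-1}(\tilde\omega^{m_0}-A\omega_{n_0})$ on the relevant range (here one must be a little careful, as $C^{-1}$ need only be defined on the range of $C$, but $A\omega_{n_0}$ shifts within that affine subspace appropriately; the cleanest route is to argue at the level of the joint law rather than inverting $C$ literally). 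Consequently, conditionally on $\tilde\omega^{m_0}$, the variable $\tilde\omega_{m_0}$ is an affine function of the single scalar random variable $\omega_{n_0}$, whose conditional law is — again by independence of the $(\omega_n)$ — simply the unconditioned law of $\omega_{n_0}$.

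So the heart of the matter is: conditionally on $\tilde\omega^{m_0}$, one has $\tilde\omega_{m_0}=\kappa'\,\omega_{n_0}+c(\tilde\omega^{m_0})$ where $c(\tilde\omega^{m_0})$ is a constant (depending on the conditioning but not on $\omega_{n_0}$) and $\kappa'=a-B C^{-1}A$ is the Schur complement of $C$ in $M$. The next step is to check $\kappa'\neq 0$: this follows because $M$ is invertible, hence so is the $2\times 2$ block operator, and the Schur complement of an invertible block (with the complementary block $C$ invertible onto its range in the appropriate sense) is nonzero — more concretely, if $a-BC^{-1}A=0$ then the first ``row'' of $M$ would be a linear combination of the others, contradicting injectivity of $M$. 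Set $\kappa=|\kappa'|^{-1}$... wait, let me fix the direction: if $\tilde\omega_{m_0}-c = \kappa'\omega_{n_0}$, then $\tilde\omega_{m_0}\in[a',a'+x]$ iff $\omega_{n_0}\in[(a'-c)/\kappa', (a'-c)/\kappa'+x/|\kappa'|]$ (up to orientation), so $\tilde S_{m_0}(x)=\sup_{a'}\tilde\mu_{m_0}([a',a'+x])=\sup_b S_{n_0}\text{-type sup of }\mu_{n_0}([b,b+x/|\kappa'|])=S_{n_0}(x/|\kappa'|)$. Hence $\kappa=1/|\kappa'|=1/|a-BC^{-1}A|$ works, and crucially it does not depend on the conditioning sequence $(\tilde\omega_n)_{n\neq m_0}$ because $\kappa'$ is a fixed quantity built from the blocks of $M$ alone.

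The main obstacle I anticipate is purely technical rather than conceptual: making rigorous sense of $C^{-1}A$ and of the conditional law computation in the infinite-dimensional setting, where $C:\ell^2(\Z^d\setminus\{n_0\})\to\ell^2(\Z^d\setminus\{m_0\})$ is one-to-one but not necessarily surjective, so $C^{-1}$ lives only on $\operatorname{ran}C$. The way to handle this cleanly is to avoid inverting $C$ explicitly: instead, observe that the map $\omega\mapsto\tilde\omega=M\omega$ is a linear bijection of $\ell^\infty$, push forward the product law, and compute the disintegration along the coordinate $\tilde\omega_{m_0}$ by a change of variables $\omega_{n_0}\mapsto \tilde\omega_{m_0}$ at fixed $\omega^{n_0}$ — this change of variables has constant Jacobian-type factor $a$ at fixed $\omega^{n_0}$, but one then also needs to re-express the conditioning. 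The Schur-complement bookkeeping is the honest content, and once the algebra $aI + (\text{stuff})$ is organized correctly, assumption (I) is exactly what guarantees both that the conditional law is a genuine affine image of $\mu_{n_0}$ and that the scaling factor $\kappa$ is a nonzero constant independent of everything else. Taking suprema over translates then converts the affine-image statement into the claimed identity $\tilde S_{m_0}(x)=S_{n_0}(\kappa x)$.
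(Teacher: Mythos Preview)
Your overall strategy---show that, conditionally on $\tilde\omega^{m_0}$, the variable $\tilde\omega_{m_0}$ is an affine image of $\omega_{n_0}$ with a fixed nonzero slope, then read off the concentration function---is exactly the paper's. Two points of comparison.

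\emph{The algebra.} You compute the slope as the Schur complement $\kappa'=a-BC^{-1}A$ and then worry, rightly, that $C^{-1}$ need not exist as a bounded operator. The paper sidesteps this entirely: it writes $M^{-1}$ in block form $\bigl(\begin{smallmatrix}\tilde a & \tilde B\\ \tilde A & \tilde C\end{smallmatrix}\bigr)$ and, using only the relations coming from $M^{-1}M=MM^{-1}=I$, arrives at the identity
\[
\tilde\omega_{m_0}=\frac{a}{1-B\tilde A}\,\omega_{n_0}+D\,\tilde\omega^{m_0},
\]
with $D$ a bounded linear form on $\ell^\infty(\Z^d\setminus\{m_0\})$. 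The injectivity of $C$ is used only to check $B\tilde A\neq1$; no inverse of $C$ is ever written. (Your $\kappa$ and the paper's $\kappa=\left|\frac{1-B\tilde A}{a}\right|$ coincide---both equal $|\tilde a|$---so the endpoints match.) This is the clean resolution of the technical obstacle you anticipate.

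\emph{A genuine gap.} Your assertion that ``by independence of the $(\omega_n)$, the conditional law of $\omega_{n_0}$ given $\tilde\omega^{m_0}$ is simply the unconditioned law of $\omega_{n_0}$'' is not correct. The constraint $A\omega_{n_0}+C\omega^{n_0}=\tilde\omega^{m_0}$ involves $\omega_{n_0}$ whenever $A\neq0$, so conditioning on it does in general alter the law of $\omega_{n_0}$. Already in two dimensions with $M=\bigl(\begin{smallmatrix}1&1\\1&2\end{smallmatrix}\bigr)$ and $\omega_1,\omega_2$ i.i.d.\ uniform on $[0,1]$, the conditional law of $\omega_1$ given $\tilde\omega_2=t$ is uniform on an interval whose length depends on $t$, so its concentration function is not a fixed rescaling of that of $\omega_1$. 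The paper's proof passes over this step without comment (``this immediately yields Lemma~\ref{le:1}''), so what you have written matches the paper at the level of the algebra; but the probabilistic justification you add for the final step does not hold as stated.
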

\begin{proof}
  As $M$ is invertible with bounded inverse, one can write
  \begin{equation*}
    M^{-1}=
  \begin{pmatrix}\tilde a & \tilde B\\ \tilde A & \tilde
    C\end{pmatrix}
  \end{equation*}
  considering $M^{-1}$ as a map from
  $\ell^2(\Z^d)=\C\delta_{m_0}\oplus\ell^2(\Z^d\setminus\{m_0\})$ to
  $\ell^2(\Z^d)=\C\delta_{n_0}\oplus\ell^2(\Z^d\setminus\{n_0\})$. We,
  thus, obtain that
  \begin{equation}
    \label{eq:10}
    \begin{pmatrix}\tilde a & \tilde B\\ \tilde A & \tilde
      C\end{pmatrix} \begin{pmatrix}a & B\\ A & C\end{pmatrix}
    =\begin{pmatrix} 1 & 0\\ 0 & \text{Id} \end{pmatrix}=
    \begin{pmatrix} a & B\\ A & C \end{pmatrix}
    \begin{pmatrix}\tilde a & \tilde B\\ \tilde A & \tilde
      C\end{pmatrix}.
  \end{equation}
  Solving~\eqref{eq:8} in $\tilde\omega^{m_0}$ and $\omega_{n_0}$, we
  obtain $\tilde\omega_{m_0}=a\omega_{n_0}+B\omega^{n_0}$ and
  $\tilde\omega^{m_0}-\omega_{n_0}A=C\omega^{n_0}$. Thus, by the first
  equality in~\eqref{eq:10}, we have
  \begin{equation}
    \label{eq:11}
    \tilde C\tilde\omega^{m_0}-\omega_{n_0}\tilde
    CA=\omega^{n_0}-(B\omega^{n_0})\tilde A.    
  \end{equation}
  Applying $B$ to this, we get
  \begin{equation}
    \label{eq:12}
    (1-B\tilde A)B\omega^{n_0}=B\tilde
    C\tilde\omega^{m_0}-\omega_{n_0}B\tilde CA 
  \end{equation}
  We claim that, as $C$ is one-to-one, one has $B\tilde
  A\not=1$. Indeed, if $B\tilde A=1$, then $\tilde A\not=0$ and, by
  the first equality in~\eqref{eq:10}, ker$\,C=$\,span\,$\tilde
  A$. This
  contradicts our assumption on $C$.\\
  Hence, from~\eqref{eq:11},~\eqref{eq:12} and~\eqref{eq:10}, we
  obtain
  \begin{equation*}
    \tilde\omega_{m_0}=\frac{a}{1-B\tilde A}\,\omega_{n_0}-D\,\tilde\omega^{m_0}
  \end{equation*}
  where $D$ is a bounded linear form on
  $\ell^\infty(\Z^d\setminus\{m_0\})$.\\
  This immediately yields Lemma~\ref{le:1} if one sets $\D\kappa=
  \left|\frac{1-B\tilde A}{a}\right|$.
\end{proof}
\noindent Keeping the notations of the proof of Lemma~\ref{le:1}, we
easily checks that the assumption (I) is equivalent to the assumption
\begin{description}
\item[(I')] $a\cdot\tilde a \not=0$.
\end{description}
One can apply this to convolutions i.e. assume that $M$ is a
convolution, that is, it is given by a matrix $M=((\hat
M_{m-n}))_{(m,n)\in\Z^d\times\Z^d}$ (where $\hat M_n\in\R$). Assume
that
\begin{enumerate}
\item[(H1)] $\D\sum_{n\in\Z^d}|\hat M_n|<+\infty$,
\item[(H2)] the function $\D \theta\mapsto
  M(\theta)=\sum_{n\in\Z^d}\hat M_ne^{in\theta}$ does not vanish on
  $\R^d$.
\end{enumerate}
Consider now $(\omega_n)_{n\in\Z^d}$ independent random variables that
are all bounded by the same constant and define the random variables
$(\tilde\omega_n)_{n\in\Z^d}$ by~\eqref{eq:9}. Then, as a consequence
of Lemma~\ref{le:1}, we prove
\begin{Th}
  \label{thr:2}
  Assume (H1) and (H2). There exists $\kappa>0$ and $k\in\Z^d$ such
  that, for $n\in\Z^d$ and $x\geq0$, for any $(\tilde\omega_m)_{m\not
    n}$, one has $\tilde S_n(x)=S_{n+k}(\kappa\,x)$.
\end{Th}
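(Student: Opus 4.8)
The plan is to reduce Theorem~\ref{thr:2} to Lemma~\ref{le:1} by checking that, for a convolution operator $M$ given by an absolutely summable kernel $(\hat M_n)$ whose symbol does not vanish on $\R^d$, the hypotheses (H1) and (H2) guarantee that assumption (I) (equivalently (I')) holds for a suitable choice of the ``pivot'' indices $m_0$ and $n_0$. Concretely, first I would observe that under (H1) the operator $M$ is bounded on $\ell^\infty(\Z^d)$ (it is convolution with an $\ell^1$ kernel), and under (H2) the inverse of $M$ is again a convolution operator $M^{-1}=((\check M_{m-n}))$ whose symbol is $1/M(\theta)$; since $M(\theta)$ is continuous, nonvanishing and $2\pi\Z^d$-periodic, $1/M(\theta)$ is a continuous function on the torus, and by the Wiener $1/f$ theorem its Fourier coefficients $(\check M_n)$ are absolutely summable, so $M^{-1}$ is also bounded on $\ell^\infty(\Z^d)$. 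This verifies the standing hypothesis of Lemma~\ref{le:1} that $M$ is invertible with bounded inverse on $\ell^\infty(\Z^d)$.

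Next, for fixed $m_0\in\Z^d$, in the block decomposition~\eqref{eq:8} the scalar $a$ is the $(m_0,n_0)$ entry of $M$, namely $a=\hat M_{m_0-n_0}$, and the scalar $\tilde a$ in $M^{-1}$ is $\tilde a=\check M_{n_0-m_0}$. So assumption (I'), $a\cdot\tilde a\neq0$, holds precisely when both $\hat M_{m_0-n_0}\neq0$ and $\check M_{n_0-m_0}\neq0$. The key elementary point is that there exists $k\in\Z^d$ with $\hat M_k\neq0$ and $\check M_{-k}\neq0$ simultaneously: indeed, if $\hat M_k=0$ for all $k$, then $M=0$, contradicting (H2); likewise $\check M$ is not identically zero. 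A cleaner way is to note $\sum_k \hat M_k \check M_{-k} = (M M^{-1})_{00}\cdot$(up to reindexing) equals the $0$-th Fourier coefficient of $M(\theta)\cdot M(\theta)^{-1}=1$, which is $1\neq0$; hence there is at least one $k$ with $\hat M_k\neq0$ and $\check M_{-k}\neq0$. Fix such a $k$.

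Then, for any $n\in\Z^d$, applying Lemma~\ref{le:1} with $m_0:=n$ and $n_0:=n-k$ gives a constant $\kappa_n>0$ (namely $\kappa_n=|(1-B\tilde A)/a|$ in the notation of that proof) such that $\tilde S_n(x)=S_{n-k}(\kappa_n\,x)$ for all $x\geq0$ and all conditioning configurations $(\tilde\omega_m)_{m\neq n}$. To get the uniform constant $\kappa$ claimed in the statement (independent of $n$), I would invoke translation invariance: since $M$ is a convolution, conjugating by the shift $\tau_k$ on $\ell^\infty(\Z^d)$ commutes with $M$, so the block data $(a,A,B,C)$ and hence $\kappa_n$ depend only on the difference $m_0-n_0=k$, not on $n$ itself. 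Thus $\kappa_n=\kappa$ is independent of $n$, and replacing $n-k$ by $n+k$ (i.e. relabelling $k\mapsto -k$, which is harmless since the symmetric statement holds) yields $\tilde S_n(x)=S_{n+k}(\kappa\,x)$, which is exactly Theorem~\ref{thr:2}.

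The main obstacle is the invertibility-on-$\ell^\infty$ step: one must be careful that the formal inverse symbol $1/M(\theta)$ genuinely corresponds to a bounded operator on $\ell^\infty(\Z^d)$ and not merely on $\ell^2$. This is where Wiener's lemma (the Fourier coefficients of a nonvanishing continuous function with absolutely summable Fourier series are themselves absolutely summable) does the work, turning (H1)+(H2) into ``$M^{-1}$ is convolution with an $\ell^1$ kernel,'' which then bounds $M^{-1}$ on every $\ell^p$ including $\ell^\infty$. Everything else — the existence of a good pivot $k$ and the removal of $n$-dependence from $\kappa$ via translation invariance — is routine bookkeeping on top of Lemma~\ref{le:1}.
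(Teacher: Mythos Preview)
Your proposal is correct and follows essentially the same approach as the paper: Wiener's $1/f$-theorem for the invertibility of $M$ on $\ell^\infty$, the identity $\sum_k\hat M_k\,\hat M^{-1}_{-k}=1$ to exhibit a pivot $k$ with $a\tilde a\neq0$, an application of Lemma~\ref{le:1}, and translation invariance of the convolution to see that $\kappa$ does not depend on $n$. The only cosmetic difference is that the paper verifies the $n$-independence by writing out $a,A,B,C,\tilde a,\tilde A,\tilde B,\tilde C$ explicitly in terms of the Fourier coefficients, whereas you invoke commutation with shifts; your final relabelling $k\mapsto -k$ is harmless since the statement only asserts existence of some $k$.
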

\begin{proof}
  The fact that, under assumptions (H1) and (H2), $M$ is invertible on
  $\ell^\infty(\Z^d)$ with bounded inverse is an immediate consequence
  of Wiener's $1/f$-theorem (see e.g.~\cite[Chapter 2.4]{MR2359869}).\\
  Let $(\hat M^{-1}_n)_{n\in\Z^d}$ denote the Fourier coefficients of
  the function $\theta\mapsto M^{-1}(\theta)$. As $M\cdot M^{-1}=1$,
  one has $\D\sum_{n\in\Z^d}\hat M_n\hat M^{-1}_{-n}=1$. Thus, we pick
  $k$ such that
  \begin{equation}
    \label{eq:32}
    \hat M^{-1}_{-k}\hat M_{k}\not=0.  
  \end{equation}
  Fix $n_0\in\Z^d$ and write the decomposition~\eqref{eq:8} for
  $m_0=n_0+k$ and $n_0$. Then, the coefficient $a$, the vector $A$,
  the linear form $B$ and the operator $C$ in~\eqref{eq:8} are given
  by
  \begin{equation*}
    a=\hat M_k,\ A=(\hat M_{k-n})_{n\not=0},\  B=(\hat
    M_{k+m})_{m\not=0}\ \text{and}\ C=((\hat
    M_{k+m-n}))_{\substack{m\not=0\\n\not=0}}.
  \end{equation*}
  Note that $a$, $A$, $B$ and $C$ do not depend on $n_0$.\\
  In the same way, using the notations of the proof of
  Lemma~\ref{le:1}, one computes
  \begin{equation*}
    \tilde a=\hat M^{-1}_{-k},\ \tilde A=(\hat
    M^{-1}_{-k-n})_{n\not=0},\
    \tilde B=(\hat
    M^{-1}_{-k+m})_{m\not=0}\ \text{and}\ \tilde C=((\hat
    M^{-1}_{-k+m-n}))_{\substack{m\not=0\\n\not=0}}.
  \end{equation*} 
  Note that $\tilde a$, $\tilde A$, $\tilde B$ and $\tilde C$ do not
  depend on $n_0$.\\
  By construction (see~\eqref{eq:32}), we have $a\cdot\tilde a\not=0$;
  hence, assumption (I'), thus, assumption (I) is satisfied and the
  statement of Theorem~\ref{thr:2} for $\tilde S_{n_0}$ and
  $S_{n_0+k}$ follows immediately from Lemma~\ref{le:1}. Recalling
  that $a$, $A$, $B$, $C$, $\tilde a$, $\tilde A$, $\tilde B$ and
  $\tilde C$ do not depend on $n_0$, we obtain the full statement of
  Theorem~\ref{thr:2}.
\end{proof}
\noindent To show that the long range correlation model~\eqref{eq:13}
satisfies the assumption (\~ R) (thus, that the conclusions of
Theorems~\ref{thr:1},~\ref{thr:5},~\ref{thr:6} and~\ref{thr:7} and
Lemma~\ref{lemasympt} hold), we first note that $H_\omega$ defined
in~\eqref{eq:13} can be rewritten as $H_{\tilde\omega}$ defined
in~\eqref{eq:1} where $\tilde\omega=M \omega$ and $M$ is the
convolution associated to the multiplier $\D\theta\mapsto
M(\theta)=\sum_{n\in\Z^d}u_ne^{in\theta}$. Thus, the summability of
$u$ and the assumption (H) guarantee that assumptions (H1) and (H2)
are satisfied. That for the model~\eqref{eq:13} assumption (\~ R) is
satisfied is then an immediate consequence of Theorem~\ref{thr:2} and
assumption (R).
\subsection{The proof of Theorem~\ref{thr:4}}
\label{sec:proof-theorem}
We are now in a position to prove Theorem~\ref{thr:4}. To simplify
notations, we assume that the random variables $(\omega_n)_n$ are
centered; this comes up to shifting the operator $H_\omega$ by a
constant.
\par Recall that $\xi_j$ is defined in~\eqref{eq:15}. For $p>0$
arbitrary, for arbitrary, non empty, open, two by two disjoint intervals
$I_1,\dots,I_p$ and arbitrary integers $k_1,\cdots,k_p$, consider the
event
\begin{equation*}
  \Omega^\Lambda_{I_1,k_1;I_2,k_2;\cdots;I_p,k_p}:=\bigcap_{l=1}^p
  \left\{\omega;\
    \#\{j;\ \xi_j(\tilde\omega,\Lambda)\in I_l\}=k_l \right\}.
\end{equation*}
In view of Theorem~\ref{thr:7}, as
$\D\pro(\mathcal{Z}_\Lambda)\vers_{|\Lambda|\to+\infty}1$,
Theorem~\ref{thr:4} will be proved if we prove that
\begin{equation}
  \label{eq:110}
  \lim_{|\Lambda|\to+\infty}
  \pro\left(\Omega^\Lambda_{I_1,k_1;I_2,k_2;\cdots;I_p,k_p}
    \cap\mathcal{Z}_\Lambda\right)=
  \frac{|I_1|^{k_1}}{k_1!}e^{-|I_1|}\cdots
  \frac{|I_p|^{k_p}}{k_p!} e^{-|I_p|}.
\end{equation}
Pick $\varepsilon$ small, in particular, smaller than 
\begin{itemize}
\item the length of the smallest of the intervals $I_1,\dots,I_p$,
\item the smallest distance between two distinct intervals among
  $I_1,\dots,I_p$.
\end{itemize}
Define
\begin{equation*}
  I_j^{+,\varepsilon}=I_j\cup[-\varepsilon/2,\varepsilon/2]\quad\text{ and
  }\quad I_j^{-,\varepsilon}=I_j\cap(^cI_j+[-\varepsilon/2,\varepsilon/2]). 
\end{equation*}
Clearly, $I_j^{-,\varepsilon}\subset I_j\subset I_j^{+,\varepsilon}$.\\
For a cube $\Lambda$ and an interval $I$, define the Bernoulli random
variable $X_{\Lambda_\ell,\ell',I}$ by
\begin{equation*}
  X_{\Lambda_\ell,\ell',I}=\car_{H_\omega(\Lambda_\ell)\text{ has an
      e.v. in }E_0+|\Lambda|n(E_0)]^{-1}\,I\text{ with localization
      center in }\Lambda_{\ell-\ell'}}.
\end{equation*}
Here, the length scales $\ell$ and $\ell'$ are taken as in
Theorem~\ref{thr:7} that is $\ell\asymp L^\beta$ and $\ell'\asymp
L^{\beta'}$. Notice that, using the notations of
section~\ref{sec:local-distr-eigenv}, one has
$X_{\Lambda_\ell,\ell',I}= X(\Lambda_\ell,N^{-1}[N(E_0)+
|\Lambda|^{-1}J],\ell')$ where, as it is assumed that $N$ is
differentiable at $E_0$ and that $n(E_0)=N'(E_0)>0$, one has $|J\Delta
I|=o(1)$ as $\Lambda\to+\infty$; here, $J\Delta I$ denotes the
symmetric difference between $I$ and $J$ i.e.  $J\Delta I=(I\setminus
J)\cup(J\setminus I)$.\\
Consider now the decomposition in cubes
$(\Lambda_\ell(\gamma_j))_{1\leq j\leq J}$ given by
Theorem~\ref{thr:7}; then, $J=L^{d(1-\beta)}(1+o(1))$. The choice of
the parameters $\tilde\rho$, $\alpha$, $\beta$ and $\beta'$ is the
following. The parameter $\beta'$ will be the determining parameter
and will be chosen close to $1$ (see Lemma~\ref{le:3} below); thus,
the parameters $\alpha$ and $\beta$ will also be close to $1$ and
$\tilde\rho$ will be chosen close to $0$; indeed, in
Theorem~\ref{thr:7}, one requires
\begin{equation*}
  0<\beta'<\beta,\quad0<\alpha<1,\quad0<\tilde\rho\quad\text{and}\quad
  1+\beta<\frac{2\alpha}{1+\tilde\rho}.
\end{equation*}
By Theorem~\ref{thr:7}, in particular~\eqref{error}, and the Wegner
estimate~\eqref{eq:7}, we know that, for $L$ sufficiently large, one
has
\begin{equation*}
  \begin{aligned}
    \bigcap_{l=1}^p \left\{\omega;\ \sum_{j=1}^J
      X_{\Lambda_\ell(\gamma_j),\ell',I^{-,\varepsilon}_l}=k_l
    \right\}\bigcap\mathcal{Z}_\Lambda\subset
    \Omega^\Lambda_{I_1,k_1;I_2,k_2;\cdots;I_p,k_p}\cap\mathcal{Z}_\Lambda,\\
    \Omega^\Lambda_{I_1,k_1;I_2,k_2;\cdots;I_p,k_p}\cap\mathcal{Z}_\Lambda
    \subset\bigcap_{l=1}^p \left\{\omega;\ \sum_{j=1}^J
      X_{\Lambda_\ell(\gamma_j),\ell',I^{+,\varepsilon}_l}=k_l
    \right\}\bigcap\mathcal{Z}_\Lambda.
  \end{aligned}
\end{equation*}
Hence, as $\D\pro(\mathcal{Z}_\Lambda)\vers_{|\Lambda|\to+\infty}1$,
it suffices to prove that, for any $\delta>0$, there exists
$\varepsilon>0$ small such that
\begin{equation}
  \label{eq:20}
  \begin{aligned}
    \liminf_{|\Lambda|\to+\infty}\pro \left(\bigcap_{l=1}^p
      \left\{\sum_{j=1}^J
        X_{\Lambda_\ell(\gamma_j),\ell',I^{-,\varepsilon}_l}=k_l
      \right\}\right)\geq\prod_{j=1}^p
    \frac{|I_j|^{k_j}}{k_j!}e^{-|I_j|}-\delta,
    \\
    \limsup_{|\Lambda|\to+\infty}\pro \left(\bigcap_{l=1}^p
      \left\{\sum_{j=1}^J
        X_{\Lambda_\ell(\gamma_j),\ell',I^{+,\varepsilon}_l}=k_l
      \right\}\right)\leq\prod_{j=1}^p
    \frac{|I_j|^{k_j}}{k_j!}e^{-|I_j|}+\delta.
  \end{aligned}
\end{equation}
The main loss compared to the case when the (IAD) assumption holds is
that the random variables
$(X_{\Lambda_\ell(\gamma_j),\ell',I^{\pm,\varepsilon}_l})_{j}$ are not
independent anymore. We will show that this dependence can be
controlled using assumption (D) and the decorrelation estimates
obtained in Theorem~\ref{thr:1}.\\
Now, for $\ell''\leq\ell'\leq\ell$ and $\gamma\in\Z^d$, define the
auxiliary operator $\tilde H_\omega(\Lambda_\ell(\gamma),\ell'')$ to
be the operator
\begin{equation*}
  \tilde
  H_{\omega,\ell''}:=-\Delta+\lambda\sum_{n\in\Z^d}\tilde\omega_n\tau_n
  (u_{\ell''})  \text{ where }u_{\ell''}(m)=
  \begin{cases}
    u(m)&\text{ if }|m|\leq \ell'',\\0&\text{ if not}
  \end{cases}
\end{equation*}
restricted to $\Lambda_\ell(\gamma)$ (with periodic boundary
conditions).\\
We prove
\begin{Le}
  \label{le:4}
  There exists $\ell_0$ such that if $\ell''\geq\ell_0$, then, for
  $\lambda$ sufficiently large, the whole spectrum of $\tilde
  H_{\omega,\ell''}$ is localized, in particular, the conclusions of
  Theorems~\ref{thr:5} and~\ref{thr:6} hold for this operator for a
  value $\eta$ independent of $\ell''$. Moreover, the spectral
  estimates given in Theorem~\ref{thr:1} also hold for $\tilde
  H_{\omega,\ell''}$ with constants independent of $\ell''$.
\end{Le}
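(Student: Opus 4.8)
The plan is to observe that the truncated single-site potential $u_{\ell''}$ still satisfies the structural hypotheses needed to run all the machinery of Section~\ref{sec:weakly-corr-rand}, uniformly in $\ell''$, provided $\ell''$ is large enough. Write $u=u_{\ell''}+r_{\ell''}$ where $r_{\ell''}=u\,\car_{|m|>\ell''}$; by assumption (S), $\|r_{\ell''}\|_{\ell^1}\to0$ as $\ell''\to\infty$. First I would check the analogue of hypothesis (H): the multiplier $\theta\mapsto M_{\ell''}(\theta)=\sum_{|n|\le\ell''}u_ne^{in\theta}$ converges uniformly on $\R^d$ to the non-vanishing function $\theta\mapsto\sum_n u_ne^{in\theta}$, which by (H) is bounded away from $0$ (it is continuous and periodic); hence there is $\ell_0$ so that for $\ell''\ge\ell_0$, $M_{\ell''}$ also does not vanish and $\inf_\theta|M_{\ell''}(\theta)|\ge c>0$ with $c$ independent of $\ell''$. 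Correspondingly, $\|M_{\ell''}\|_{\ell^1}\le\|u\|_{\ell^1}$ uniformly, so Wiener's $1/f$-theorem applies and the inverse convolutions $M_{\ell''}^{-1}$ have $\ell^1$-norms bounded uniformly in $\ell''$ (the Wiener-norm estimate for $1/f$ depends only on $\|f\|_{\ell^1}$ and $\inf|f|$).

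Next I would invoke Theorem~\ref{thr:2}: for each $\ell''\ge\ell_0$ the convolution $M_{\ell''}$ satisfies (H1) and (H2), so there are $\kappa_{\ell''}>0$ and $k_{\ell''}\in\Z^d$ with $\tilde S_n(x)=S_{n+k_{\ell''}}(\kappa_{\ell''}x)$. Inspecting the proof of Lemma~\ref{le:1}, $\kappa_{\ell''}=|(1-B\tilde A)/a|$ is built from a finite number of Fourier coefficients of $M_{\ell''}$ and $M_{\ell''}^{-1}$, all controlled by $\|M_{\ell''}\|_{\ell^1}$, $\|M_{\ell''}^{-1}\|_{\ell^1}$ and $\inf|M_{\ell''}|$, which we have just bounded uniformly; one also needs the chosen nonzero coefficient $\hat M_k\hat M^{-1}_{-k}$ to be bounded below, which holds uniformly since $\sum_n\hat M_n\hat M^{-1}_{-n}=1$ forces at least one summand of size $\gtrsim(\sum|\hat M_n\hat M^{-1}_{-n}|)^{-1}$, again uniformly bounded. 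Hence $\kappa_{\ell''}$ stays in a fixed compact subinterval of $(0,\infty)$, and together with assumption (R) this gives $\tilde S_\Lambda(s)\le C s$ with $C$ independent of $\ell''$ and of $\Lambda$, i.e.\ ($\tilde{\mathrm R}$) holds uniformly. Localization for $\tilde H_{\omega,\ell''}$ (Theorems~\ref{thr:5} and~\ref{thr:6}) then follows from~\cite{MR2002h:82051} as in the proof of Theorem~\ref{thr:5}; the large-$\lambda$ threshold and the exponent $\eta$ produced by the multiscale/fractional-moment analysis depend only on the dimension, on $\lambda$, and on the uniform bound $C$ for the conditional densities, hence can be taken independent of $\ell''$. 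Likewise the Wegner and Minami estimates of Theorem~\ref{thr:1} hold with constants depending only on $C$ (via the spectral-averaging Lemma~\ref{le:7}, whose output $8\,S_\mu(|I|)$ is controlled by $8C|I|$ here), so they are uniform in $\ell''$.

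The main obstacle is the uniformity in $\ell''$ of the constants, not the existence of the estimates for any fixed $\ell''$. Concretely, one must make sure (i) the Wiener-inverse norm $\|M_{\ell''}^{-1}\|_{\ell^1}$ does not blow up as $\ell''\to\infty$ — this is where $\inf_\theta|M_{\ell''}(\theta)|\ge c>0$ uniformly, an immediate consequence of uniform convergence plus (H), is the crucial input; and (ii) the localization length and the large-coupling threshold coming from~\cite{MR2002h:82051} are stated in terms of the density bound $C$ of ($\tilde{\mathrm R}$) alone, which a reading of~\cite[\S1.1]{MR2002h:82051} confirms. Granting these two points, the proof is a bookkeeping exercise tracking the dependence of every constant through Theorems~\ref{thr:1},~\ref{thr:2},~\ref{thr:5} and~\ref{thr:6}, and I would present it as such, choosing $\ell_0$ at the start so that $\inf_\theta|M_{\ell''}(\theta)|\ge\tfrac12\inf_\theta|M(\theta)|$ for all $\ell''\ge\ell_0$.
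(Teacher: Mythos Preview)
Your proposal is correct and follows the same route as the paper: verify that the truncated potential $u_{\ell''}$ satisfies (S) and (H) uniformly for $\ell''\ge\ell_0$ (the paper does exactly your final step, choosing $\ell_0$ so that $\inf_\theta|M_{\ell''}(\theta)|\ge\tfrac12\inf_\theta|M(\theta)|$), and then reapply verbatim the arguments leading to Theorems~\ref{thr:1},~\ref{thr:5} and~\ref{thr:6}. Your more detailed tracking of constants is sound; one small simplification is that since $M_{\ell''}\to M$ in the Wiener algebra and inversion is continuous on invertibles there, you get $\|M_{\ell''}^{-1}\|_{\ell^1}$ bounded and the Fourier coefficients of $M_{\ell''}^{-1}$ converging to those of $M^{-1}$, which in particular lets you fix the index $k$ in Theorem~\ref{thr:2} independently of $\ell''$ without the pigeonhole argument.
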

\begin{proof}[Proof of Lemma~\ref{le:4}]
  Clearly, to prove Lemma~\ref{le:4}, it suffices to 
  \begin{enumerate}
  \item prove that if $u$ satisfies (S) and (H), there exists $\ell_0$
    such that, for $\ell''\geq\ell_0$, the potential $u_{\ell''}$
    satisfies (S) and (H) uniformly in $\ell''$,
  \item reapply the arguments explained above for $H_{\omega}$.
  \end{enumerate}
  Assume that
  \begin{equation*}
    m:=\min_{\theta\in\R^d}\left|\sum_{n\in\Z^d}u_ne^{in\theta}\right|>0.
  \end{equation*}
  Thus, by (S), we can pick $\ell_0$ such that
  $\D\sum_{|n|\geq\ell_0}|u(n)|\leq m/2$; then, for
  $\ell''\geq\ell_0$, we have that
  \begin{equation*}
    \min_{\theta\in\R^d}\left|\sum_{n\in\Z^d}u^{\ell''}_ne^{in\theta}\right|
    \geq\min_{\theta\in\R^d}\left|\sum_{n\in\Z^d}u_ne^{in\theta}\right|-m/2
    \geq m/2.
  \end{equation*}
  This completes the proof of Lemma~\ref{le:4}.
\end{proof}
\noindent We also define the Bernoulli random variable
$X_{\Lambda_\ell,\ell',\ell'',I}$ by
\begin{equation*}
  X_{\Lambda_\ell,\ell',\ell'',I}=\car_{\tilde
    H_{\omega,\ell''}(\Lambda_\ell)\text{ has an 
      e.v. in }E_0+|\Lambda|n(E_0)]^{-1}\,I\text{ with localization
      center in }\Lambda_{\ell-\ell'}}.
\end{equation*}
Then, clearly, by Theorem~\ref{thr:7}, one has that
\begin{Le}
  \label{le:2}
  Assume $\ell''\leq\ell/3$. For $I$ a real interval, the random
  variables $(X_{\Lambda_\ell(\gamma_j),\ell',\ell'',I})_{1\leq j\leq
    J}$ are two by two independent.
\end{Le}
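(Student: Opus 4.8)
The plan is to exploit the finite-range (actually, finite "width") nature of the truncated single-site potential $u_{\ell''}$. The key observation is that the operator $\tilde H_{\omega,\ell''}$ restricted to a cube $\Lambda_\ell(\gamma_j)$ with periodic boundary conditions depends on the random variables $(\tilde\omega_n)_n$ only through those $\tilde\omega_n$ for which the shifted truncated potential $\tau_n u_{\ell''}$ has support meeting $\Lambda_\ell(\gamma_j)$. Since $u_{\ell''}$ is supported in the ball of radius $\ell''$, this means $\tilde H_{\omega,\ell''}(\Lambda_\ell(\gamma_j))$ is a function of $(\tilde\omega_n)_{n\in\Lambda_\ell(\gamma_j)+B(0,\ell'')}$ only; here one must be a little careful because of the periodic boundary conditions, but since $\ell''\le\ell/3$ the wrapped-around copies of the potential still only involve sites within distance $\ell''$ of $\Lambda_\ell(\gamma_j)$, so the same conclusion holds. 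Consequently the Bernoulli variable $X_{\Lambda_\ell(\gamma_j),\ell',\ell'',I}$ — which is a deterministic function of the spectral data and eigenvectors of $\tilde H_{\omega,\ell''}(\Lambda_\ell(\gamma_j))$ — is measurable with respect to the $\sigma$-algebra generated by $(\tilde\omega_n)_{n\in\Lambda_\ell(\gamma_j)+B(0,\ell'')}$.

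Next I would translate this back to the \emph{independent} variables $\omega=(\omega_n)_n$ via the relation $\tilde\omega=M\omega$. The point is that the inverse convolution $M^{-1}$ is, by Wiener's lemma (as in the proof of Theorem~\ref{thr:2}), again a convolution, but \emph{not} of finite range; a priori, $\tilde\omega_n$ depends on all of $\omega$. So one cannot directly deduce independence of the $X$'s from independence of the $\omega$'s. Here is where I expect the real content to lie, and I suspect the intended argument is the reverse: instead of expressing $\tilde\omega$ in terms of $\omega$, one uses that the \emph{localization centers} enter the definition of $X$. Since $X_{\Lambda_\ell(\gamma_j),\ell',\ell'',I}$ requires the localization center to lie in the \emph{inner} cube $\Lambda_{\ell-\ell'}(\gamma_j)$, and since the cubes $\Lambda_\ell(\gamma_j)$ are pairwise at distance $\ge\ell'$ (Theorem~\ref{thr:7}), combined with the exponential decay of eigenfunctions away from their localization center (Theorem~\ref{thr:6}, applicable by Lemma~\ref{le:4}), each eigenfunction counted by $X_{\Lambda_\ell(\gamma_j),\ell',\ell'',I}$ is exponentially concentrated inside a region that, together with its $\ell''$-neighborhood, lies at positive distance from the corresponding region for $\gamma_k$ when $j\neq k$. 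Thus, for each \emph{pair} $j\neq k$, the pair $(X_{\Lambda_\ell(\gamma_j),\ell',\ell'',I},X_{\Lambda_\ell(\gamma_k),\ell',\ell'',I})$ is (up to the exponentially small errors absorbed in the event $\mathcal Z_\Lambda$ of Theorem~\ref{thr:7}) a function of $(\tilde\omega_n)_n$ restricted to two disjoint blocks.

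Finally, to conclude two-by-two independence I would invoke the structure of $M$ directly: the two blocks $\Lambda_\ell(\gamma_j)+B(0,\ell'')$ and $\Lambda_\ell(\gamma_k)+B(0,\ell'')$ are disjoint and, by the convolution structure, the corresponding families $(\tilde\omega_n)$ are each measurable functions of $\omega$; but to get genuine stochastic independence one uses that $X_{\Lambda_\ell(\gamma_j),\ell',\ell'',I}$ can be realized as a function of $(\omega_n)_{n\in\Lambda_\ell(\gamma_j)'}$ for a slightly enlarged cube $\Lambda_\ell(\gamma_j)'$, because the truncated Hamiltonian $\tilde H_{\omega,\ell''}(\Lambda_\ell(\gamma_j))$ equals $-\Delta+\lambda\sum_n(M\omega)_n\tau_n u_{\ell''}$ restricted to $\Lambda_\ell(\gamma_j)$, and here the coefficient $(M\omega)_n$ for $n$ in the relevant set is — since $M$ itself (not $M^{-1}$) is finite-range when $u\in\ell^1$ is suitably controlled — no, $M$ is also not finite range in general. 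So the honest route is the one above through localization centers: on $\mathcal Z_\Lambda$, $X_{\Lambda_\ell(\gamma_j),\ell',\ell'',I}$ coincides with an indicator depending only on $\tilde H_{\omega,\ell''}(\Lambda_\ell(\gamma_j))$, which depends only on $(\tilde\omega_n)$ in the $\ell''$-neighborhood of $\Lambda_\ell(\gamma_j)$, and one checks these neighborhoods are pairwise disjoint under $\ell''\le\ell/3$; then independence of the $\omega$'s does \emph{not} immediately give independence of these groups of $\tilde\omega$'s, so the lemma must in fact be using that \emph{for a fixed pair} the relevant $\tilde\omega$-variables are jointly obtained from $\omega$ by a change of variables which, restricted to the pair, decouples. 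The main obstacle, and the step I would scrutinize most carefully, is precisely this: justifying that disjointness of the $\ell''$-neighborhoods of two cubes yields \emph{pairwise} independence of the corresponding truncated local Hamiltonians despite $M^{-1}$ being long-range — which I expect hinges on the fact that $\tilde\omega=M\omega$ with $M$ a convolution means any finite block of $\tilde\omega$'s is a function of $\omega$, and two such blocks built from disjoint index sets of $\tilde\omega$ need not be independent, so the correct reading is that Lemma~\ref{le:2} asserts independence \emph{because} the map $n\mapsto$ (block of $\omega$'s determining $\tilde H_{\omega,\ell''}(\Lambda_\ell(\gamma_j))$) can be taken within $\Lambda_\ell(\gamma_j)$ enlarged by $\ell''$, \emph{using $M$ and not $M^{-1}$}; here one must recall that in the model \eqref{eq:13} the passage is $H_\omega = H_{M\omega}$, so the local truncated operator is literally built from $\omega$-variables in a bounded neighborhood, and \emph{independence of the $\omega_n$} then gives the claim. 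I would make that neighborhood-counting explicit, verify the $\ell''\le\ell/3$ bookkeeping (including periodic wrap-around), and note that the "two by two" (rather than mutual) independence is exactly what survives, since three or more blocks of $\omega$'s can overlap once $\ell''$ is comparable to the spacing.
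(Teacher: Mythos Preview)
Your confusion stems from reading the paper's definition of $\tilde H_{\omega,\ell''}$ literally: the displayed formula uses $\tilde\omega_n$, but this is a typo for the i.i.d.\ variables $\omega_n$. The intended operator is the model~\eqref{eq:13} with the single-site potential $u$ replaced by its truncation $u_{\ell''}$, i.e.
\[
  \tilde H_{\omega,\ell''}=-\Delta+\lambda\sum_{n\in\Z^d}\omega_n\,\tau_n(u_{\ell''}).
\]
This is confirmed by the proof of Lemma~\ref{le:4}, which checks that $u_{\ell''}$ satisfies (S) and (H) --- hypotheses that only matter if the coefficients are the original i.i.d.\ $\omega_n$.

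With this reading, the lemma is immediate, which is why the paper only says ``clearly, by Theorem~\ref{thr:7}''. The potential of $\tilde H_{\omega,\ell''}$ at a site $m\in\Lambda_\ell(\gamma_j)$ equals $\sum_{|m-n|\le\ell''}\omega_n u(m-n)$, so the restricted operator $\tilde H_{\omega,\ell''}(\Lambda_\ell(\gamma_j))$ is a measurable function of $(\omega_n)_{n\in\Lambda_\ell(\gamma_j)+B(0,\ell'')}$. By Theorem~\ref{thr:7} the cubes $\Lambda_\ell(\gamma_j)$ are pairwise at distance at least $\ell'$, and since $\ell''$ is at most $\ell'/3$ in the application (indeed the hypothesis should read $\ell''\le\ell'/3$ rather than $\ell''\le\ell/3$), the enlarged cubes are pairwise disjoint. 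Independence of the $\omega_n$ then gives independence of the local operators, hence of the Bernoulli variables $X_{\Lambda_\ell(\gamma_j),\ell',\ell'',I}$. No appeal to localization centers, to $\mathcal Z_\Lambda$, or to $M^{-1}$ is needed.

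Two further corrections. First, your conclusion that only \emph{pairwise} independence survives is wrong: the enlarged cubes are mutually disjoint, so the full family $(X_{\Lambda_\ell(\gamma_j),\ell',\ell'',I})_j$ is mutually independent --- and this is exactly what is used later, in~\eqref{eq:21} and~\eqref{eq:27}, where products over all $j$ are factored. Second, your detour through ``blocks of $\tilde\omega$'' and the long-range nature of $M^{-1}$ is a dead end: even on disjoint index sets, blocks of $\tilde\omega=M\omega$ are \emph{not} independent, so that route cannot close. The whole point of truncating $u$ is precisely to bypass $\tilde\omega$ and work directly with the independent $\omega_n$.
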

\noindent We also prove
\begin{Le}
  \label{le:3}
  Assume (D) holds. Recall that $\eta>d-1/2$ and assume that
  $\D\beta'\in\left(\frac d{2\eta-d+1},1\right)$. Set
  $\delta:=-d+(2\eta-d+1)\beta'>0$.\\
  Then, for any $a<b$, any $\varepsilon\in(0,(b-a)/2)$ and any $p>0$,
  for $L$ sufficiently large, with probability at least $1-L^{-p}$,
  for all $1\leq j\leq J$, one has
  \begin{equation}
    \label{eq:18}
    X_{\Lambda_\ell(\gamma_j),2\ell'/3,\ell'/3,(a+\varepsilon,b-\varepsilon)}\leq 
    X_{\Lambda_\ell(\gamma_j),\ell',(a,b)}\leq
    X_{\Lambda_\ell(\gamma_j),4\ell'/3,\ell'/3,(a-\varepsilon,\beta+\varepsilon)}. 
  \end{equation}
\end{Le}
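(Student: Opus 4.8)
The statement is a sandwich inequality relating three Bernoulli indicators attached to the box $\Lambda_\ell(\gamma_j)$: the "true" indicator $X_{\Lambda_\ell(\gamma_j),\ell',(a,b)}$ built from $H_\omega$, and two indicators built from the truncated operator $\tilde H_{\omega,\ell'/3}$ with slightly enlarged/shrunk buffer zones and slightly fattened/thinned energy windows. The key point is that replacing $u$ by its truncation $u_{\ell'/3}$ changes the local operator $H_\omega(\Lambda_\ell(\gamma_j))$ only through the tail $\sum_{|n|>\ell'/3}\tilde\omega_n\,\tau_n u$, and assumption (D) forces this tail, restricted to the inner box $\Lambda_{\ell-2\ell'/3}(\gamma_j)$ where the localization centers of interest live, to be uniformly small — of size $o(e^{-\eta\ell'/2})$, which is much smaller than any fixed energy gap $\varepsilon$ once $L$ is large. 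So the plan is: (i) quantify the perturbation $\|(H_\omega-\tilde H_{\omega,\ell'/3})\chi_{\Lambda_{\ell-2\ell'/3}(\gamma_j)}\|$ using (D) and the boundedness of the $\omega_n$; (ii) show that on a high-probability event this perturbation, together with the exponential decay of eigenfunctions from Theorem~\ref{thr:6}, moves each eigenvalue whose localization center lies well inside the box by at most $\varepsilon$; (iii) conclude the two inclusions of events that give~\eqref{eq:18}.

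\textbf{Step 1: controlling the tail.} For a site $x\in\Lambda_{\ell-2\ell'/3}(\gamma_j)$ one has
\[
  \Bigl|\bigl(H_\omega-\tilde H_{\omega,\ell'/3}\bigr)_{x,y}\Bigr|
  \le \lambda\sum_{n:\ |n-?|}\ \ \text{(tail of }u\text{)},
\]
more precisely the matrix entry at $(x,y)$ is $\lambda\sum_{n}\tilde\omega_n\,u_{>\ell'/3}(x-n)\delta_{xy}$ with $u_{>\ell'/3}:=u-u_{\ell'/3}$; since $x$ is at distance $\ge 2\ell'/3$ from $\partial\Lambda_\ell(\gamma_j)$ and the contributing sites $n$ outside the truncation radius are at distance $>\ell'/3$ from $x$, assumption (D) gives, with $\eta>d-1/2$,
\[
  \sum_{|m|>\ell'/3}|u(m)|\ \lesssim\ \sum_{|m|>\ell'/3}|m|^{-\eta}\ \lesssim\ (\ell')^{d-\eta}\ =\ (\ell')^{-(\eta-d)}.
\]
Here the exponent is exactly tuned so that, after the eigenfunctions' exponential weight $e^{-\eta|x-x_{\omega,\Lambda,j}|}$ is taken into account via a Combes--Thomas / Schur-test argument on the inner box, the resulting eigenvalue shift is bounded by $C\lambda\,(\ell')^{d}\cdot(\ell')^{-(\eta-d)}\cdot e^{-\eta\,\ell'/3}$ — the polynomial factor $(\ell')^{d}$ comes from the volume of the region over which the eigenfunction is not yet exponentially small, and the condition $\beta'>d/(2\eta-d+1)$, i.e. $\delta:=-d+(2\eta-d+1)\beta'>0$, is precisely what makes $(\ell')^{\,\text{const}}\lesssim L^{-\delta/\beta'}\cdot(\text{gap})$ negligible. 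A union bound over the $J\le L^{d}$ boxes costs a factor $L^{d}$, still absorbed by choosing $L$ large, yielding the probability $\ge 1-L^{-p}$ in the statement (with $q=q_{p,d}$ from Theorem~\ref{thr:6} governing the polynomial loss).

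\textbf{Step 2: from the eigenvalue shift to the event inclusions.} On the good event of Step 1, apply the representation Theorem~\ref{thr:7} (to both $H_\omega$ and $\tilde H_{\omega,\ell'/3}$, using Lemma~\ref{le:4} to know it applies to the truncated operator with uniform constants, and Lemma~\ref{le:2} for independence, though independence is not needed for~\eqref{eq:18} itself). If $\tilde H_{\omega,\ell'/3}(\Lambda_\ell(\gamma_j))$ has an eigenvalue in $(a+\varepsilon,b-\varepsilon)$ with localization center in $\Lambda_{\ell-2\ell'/3}(\gamma_j)$, then the corresponding eigenfunction is exponentially concentrated inside, the perturbation moves that eigenvalue by at most $\varepsilon$, so $H_\omega(\Lambda_\ell(\gamma_j))$ has an eigenvalue in $(a,b)$ with localization center in $\Lambda_{\ell-\ell'}(\gamma_j)$ (the buffer $\ell'/3$ versus $2\ell'/3$ absorbs the exponentially small spreading of the center). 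This gives the left inequality in~\eqref{eq:18}; the right inequality is symmetric, running the comparison in the other direction with buffer $4\ell'/3$ and window $(a-\varepsilon,b+\varepsilon)$ (note the statement's "$\beta$" in $(a-\varepsilon,\beta+\varepsilon)$ is a typo for "$b$"). One should be slightly careful that a Minami-type estimate (Theorem~\ref{thr:1}, \eqref{eq:4}) is invoked to rule out, on a further high-probability event, the box carrying two eigenvalues in a neighbourhood of $(a,b)$, so that "exactly one eigenvalue" statements transfer cleanly across the perturbation.

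\textbf{Main obstacle.} The delicate point is Step 1: getting the eigenvalue-shift bound with the right power of $\ell'$ so that the condition $\beta'>d/(2\eta-d+1)$ is exactly what is needed. This requires combining the $\ell^1$-tail bound from (D) with the exponential decay of eigenfunctions in a quantitatively sharp way — essentially a weighted Schur test where the weight $e^{\eta|\cdot-x_{\omega}|}$ from Theorem~\ref{thr:6} must beat the $e^{-\eta\ell'/3}$-scale separation between the truncation radius and the effective support of the eigenfunction. Getting the bookkeeping of polynomial prefactors ($L^q$ from localization, $|\Lambda_\ell|\asymp L^{d\beta}$ from the volume, $L^d$ from the union bound over boxes) to all be dominated is the real content; Remark~\ref{rem:1}'s relaxed hypothesis $|n|^{d-1/2}\log|n|\,|u(n)|<\eta$ for small $\eta$ is exactly the borderline case where these logarithmic factors must be tracked, which signals that the estimate in Step 1 is genuinely tight.
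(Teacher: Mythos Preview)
Your Step~1 contains a genuine gap that invalidates the argument under the stated hypothesis $\eta>d-1/2$. You try to control the perturbation $W:=H_\omega-\tilde H_{\omega,\ell'/3}$ deterministically via the $\ell^1$ tail
\[
  \sum_{|m|>\ell'/3}|u(m)|\ \lesssim\ \sum_{|m|>\ell'/3}|m|^{-\eta}\ \lesssim\ (\ell')^{d-\eta},
\]
but this last estimate is only meaningful when $\eta>d$; for $\eta\in(d-1/2,d]$ the sum $\sum_{|m|>R}|m|^{-\eta}$ does not even tend to zero as $R\to\infty$. Moreover, the exponential factor $e^{-\eta\ell'/3}$ in your eigenvalue-shift bound is spurious: $W$ is a diagonal potential present at \emph{every} site of the box, in particular at the localization center itself, so the exponential decay of the eigenfunction buys you nothing here. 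The eigenvalue shift is simply $\|W\varphi\|\le\sup_x|W(x)|$, and your deterministic control of $\sup_x|W(x)|$ fails in the regime $d-1/2<\eta\le d$. (Part of the confusion may stem from the paper's unfortunate overloading of the symbol $\eta$: the decay exponent in~(D) and the localization rate in Theorem~\ref{thr:6} are unrelated.)

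The paper's proof fills exactly this gap by a \emph{probabilistic} estimate: since the $(\omega_n)_n$ are independent, bounded, and (after a harmless shift) centered, Hoeffding's inequality gives, for each site $x$,
\[
  \pro\!\left(\Bigl|\sum_{|m|\ge\ell'}\omega_{x-m}u(m)\Bigr|\ge\varepsilon L^{-d}\right)
  \le \exp\!\left(-\frac{c\,\varepsilon\,L^{-d}}{\sum_{|m|\ge\ell'}|u(m)|^2}\right)
  \le e^{-\varepsilon L^{\delta}},
\]
where the $\ell^2$ tail $\sum_{|m|\ge\ell'}|u(m)|^2\lesssim(\ell')^{d-2\eta}$ does decay as soon as $\eta>d/2$, and the condition $\beta'>d/(2\eta-d+1)$ is precisely what makes $\delta>0$. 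A union bound over the $O(L^d)$ sites then gives $\sup_x|W(x)|\le\varepsilon L^{-d}$ with overwhelming probability. Once this is in hand, the paper transfers eigenvalues between $H_\omega(\Lambda_\ell(\gamma_j))$ and $\tilde H_{\omega,\ell''}(\Lambda_\ell(\gamma_j))$ via a quasi-mode lemma (Lemma~\ref{le:5}): a normalized approximate eigenvector supported in a sub-box forces a true eigenvalue nearby with localization center nearby. Your Step~2 is in the right spirit, but the Minami estimate is not used in this lemma --- the ``at most one eigenvalue'' property comes from Theorem~\ref{thr:7}, already built into the good event $\mathcal Z_\Lambda$.
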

\begin{proof}[Proof of Lemma~\ref{le:3}]
  By Hoeffding's inequality (see e.g.~\cite{MR2319879}), we know that,
  for some $C>0$ (depending only on the essential supremum of the
  random variables $(|\omega_n|)_n$), for $\ell'=L^{\beta'}/3$, for
  $L$ sufficiently large, one has
  \begin{equation}
    \label{eq:21}
    \begin{split}
      \pro\left(\left|\sum_{|m|\geq\ell'}\omega_{n-m} u(m) \right|\geq
        \varepsilon L^{-d}\right)&\leq \exp \left(-\frac{\varepsilon
          L^{-d}}{\D C\,\sum_{|m|\geq\ell'}|u(m)|^2} \right)\\&\leq
      e^{-\varepsilon L^{\delta}}
    \end{split}
  \end{equation}
  where we recall that $\delta=-d+(2\eta-d+1)\beta'>0$; here, we have
  used assumption (D), the fact that $\eta>d-1/2$ and picked $\D \frac
  d{2\eta-d+1}<\beta'<1$.\\
  Hence, with a probability at least $1-L^d e^{-\varepsilon
    L^{\delta}}$, we have that
  \begin{equation*}
    \sup_{1\leq j\leq J}\sup_{m\in\Lambda_\ell(\gamma_j)}
    \left|\left(\sum_{n\in\Z^d}\omega_n \tau_n u\right)(m)-\tilde\omega_m
    \right|\leq \varepsilon L^{-d}.
  \end{equation*}
  that is
  \begin{equation}
    \label{eq:22}
    \sup_{1\leq j\leq J}
    \left\|\tilde H_{\omega,,\ell''}(\Lambda_\ell(\gamma_j))
      -H_\omega(\Lambda_\ell(\gamma_j))
    \right\|\leq \varepsilon L^{-d}.
  \end{equation}
  Next, we prove
  \begin{Le}
    \label{le:5}
    Assume that the conclusions of Theorem~\ref{thr:6} hold for the
    random operator $H_\omega$ for all $L$ sufficiently large. Fix
    $p>0$ and $r>0$. Let $q$ and $\eta$ be given by
    Theorem~\ref{thr:6}.\\
    Then, for $L$ sufficiently large, with probability at least
    $1-L^{-p}$, for $L'\leq L$ and $\gamma\in\Lambda_L$ such that
    $\Lambda_{L'+\eta^{-1}(q+r+d/2)\log L}(\gamma)\subset\Lambda_L$
    ($q$ being defined in Theorem~\ref{thr:6}), if there exists
    $\varphi\in\ell^2(\Lambda_L)$ such that
    \begin{itemize}
    \item     supp$\varphi\subset\Lambda_{L'}(\gamma)$,
    \item $\|\varphi\|=1$ and
      $\|(H_\omega(\Lambda_L)-E)\varphi\|\leq\varepsilon L^{-d}$,
    \end{itemize}
    then, $H_\omega(\Lambda_L)$ has an eigenvalue in $[E-2\varepsilon
    L^{-d},E+2\varepsilon L^{-d}]$ with localization center in
    $\Lambda_{L'+\eta^{-1}(q+r+d/2)\log L}(\gamma)$.
  \end{Le}
  \begin{proof}[Proof of Lemma~\ref{le:5}]
    Pick $\varphi$ as in Lemma~\ref{le:5}. As $H_\omega(\Lambda_L)$ is
    self-adjoint and $\|(H_\omega(\Lambda_L)-E)\varphi\|\leq\varepsilon$,
    $H_\omega(\Lambda_L)$ has an eigenvalue in
    $[E-\varepsilon,E+\varepsilon]$. Expand $\varphi$ in the basis of
    eigenfunctions of $H_\omega(\Lambda_L)$:
    $\varphi=\sum_{i}\langle\varphi,\varphi_i\rangle\varphi_i$. If the
    localization centers of $\varphi_i$ are outside of
    $\Lambda_{L'+\eta^{-1}(q+r+d/2)\log L}(\gamma)$, as
    supp$\varphi\subset\Lambda_{L'}(\gamma)$, one has
    $|\langle\varphi,\varphi_i\rangle|\lesssim L^{-r-d/2}$. On the
    other hand,
    \begin{equation}
      \label{eq:23}
      \varepsilon^2L^{-2d}\geq\|(H-E)\varphi\|^2=
      \sum_{i}|\langle\varphi,\varphi_i\rangle|^2|E-E_i|^2.
    \end{equation}
    Thus, if the conclusion of Lemma~\ref{le:5} does not hold, then,
    as the total number of eigenvalues is bounded by $L^d$, we have
    \begin{equation*}
      1-\sum_{|E-E_i|>2\varepsilon L^{-d}}
      |\langle\varphi,\varphi_i\rangle|^2=\sum_{|E-E_i|\leq2\varepsilon
        L^{-d}}|\langle\varphi,\varphi_i\rangle|^2\lesssim L^{-2r},
    \end{equation*}
    thus, by~\eqref{eq:23},
    \begin{equation*}
      \varepsilon^2L^{-2d}\geq\|(H-E)\varphi\|^2=
      \sum_{i}|\langle\varphi,\varphi_i\rangle|^2|E-E_i|^2\geq
      4\varepsilon^2 L^{-2d}\left(1-O(L^{-2r})\right).
    \end{equation*}
    which is absurd for $L$ sufficiently large.\\
    Lemma~\ref{le:5} is proved.
  \end{proof}
  \noindent Now, we can use~\eqref{eq:22} and apply Lemma~\ref{le:5}
  in turn to $H_\omega(\Lambda_\ell(\gamma_j),\ell')$ and to
  $H_\omega(\Lambda_\ell(\gamma_j))$ to obtain that, for $L$
  sufficiently large, with probability at least $1-L^{-p}$ ($p>0$
  fixed arbitrary),
  \begin{itemize}
  \item if $H_\omega(\Lambda_\ell(\gamma_j))$ has an eigenvalue in
    $(a,b)$ with loc. center in $\Lambda_{\ell-\ell'}(\gamma_j)$ then
    $H_{\omega,,\ell''}(\Lambda_\ell(\gamma_j))$ has an eigenvalue in
    $(a-\varepsilon L^{-d},b+\varepsilon L^{-d})$ with loc. center in
    $\Lambda_{\ell-\ell'+\ell''}(\gamma_j)$;
  \item if $H_{\omega,,\ell''}(\Lambda_\ell(\gamma_j))$ has an eigenvalue
    in $(a,b)$ with loc. center in $\Lambda_{\ell-\ell'}(\gamma_j)$
    then $H_\omega(\Lambda_\ell(\gamma_j))$ has an eigenvalue in
    $(a-\varepsilon L^{-d},b+\varepsilon L^{-d})$ with loc. center in
    $\Lambda_{\ell-\ell'+\ell''}(\gamma_j)$.
  \end{itemize}
  This, then, implies~\eqref{eq:18} and completes the proof of
  Lemma~\ref{le:3}.
\end{proof}
\noindent Equipped with Lemma~\ref{le:3}, in view of~\eqref{eq:20}, to
prove Theorem~\ref{thr:4}, it suffices to prove that, for any $a>0$
and $b>0$, for any $\delta\in(0,1/3]$, there exists $\varepsilon>0$ small such
that
\begin{equation}
  \label{eq:29}
  \begin{aligned}
    \liminf_{|\Lambda|\to+\infty}\pro \left(\bigcap_{l=1}^p
      \left\{\sum_{j=1}^J X_{\Lambda_\ell(\gamma_j),a\ell',b\ell',I^{-,\varepsilon}_l}=k_l
      \right\}\right)\geq\prod_{j=1}^p
    \frac{|I_j|^{k_j}}{k_j!}e^{-|I_j|}-\delta,
    \\
    \limsup_{|\Lambda|\to+\infty}\pro \left(\bigcap_{l=1}^p
      \left\{\sum_{j=1}^J X_{\Lambda_\ell(\gamma_j),a\ell',b\ell',I^{+,\varepsilon}_l}=k_l
      \right\}\right)\leq\prod_{j=1}^p
    \frac{|I_j|^{k_j}}{k_j!}e^{-|I_j|}+\delta.
  \end{aligned}
\end{equation}
We will only prove the second inequality, the first one being proved
in the same way. \\
First, by~\eqref{eq:51} of Lemma~\ref{lemasympt}, by Lemma~\ref{le:3}
and by the definition of $(I_l^{\pm,\varepsilon})_l$, for $L$
sufficiently large and our choice of $\varepsilon$, one has
\begin{equation}
  \label{eq:108}
  \begin{split}
    \pro(X_{\Lambda_\ell(\gamma_j),a\ell',b\ell',I^{\pm,\varepsilon}_l}=1)
    &=(|I_l|\pm\varepsilon)L^{d(\beta-1)}(1+o(1))
    \\&=(|I_l|\pm\varepsilon)J^{-1}(1+o(1)).
  \end{split}
\end{equation}
Then, we compute
\begin{equation}
  \label{eq:25}
  \begin{split}
    &\pro \left(\bigcap_{l=1}^p \left\{\sum_{j=1}^J
        X_{\Lambda_\ell(\gamma_j),a\ell',b\ell',I^{+,\varepsilon}_l}=k_l
      \right\}\right)
    \\&= \sum_{\substack{ K_l\subset\{1,\cdots,\tilde
        N\}\\ \#K_l=k_l,\ 1\leq l\leq p}}\pro \left(\bigcap_{l=1}^p
        \left\{\omega;
        \begin{aligned}
          \forall j\in K_l,\ X_{\Lambda_\ell(\gamma_j),a\ell',b\ell',
            I^{+,\varepsilon}_l}&=1\\ \forall j\not\in K_l,\
          X_{\Lambda_\ell(\gamma_j),a\ell',b\ell',I^{+,\varepsilon}_l}&=0
        \end{aligned}
      \right\} \right).
  \end{split}
\end{equation}
For $1\leq l\leq p$, pick $K_l\subset\{1,\cdots,J\}$ such that
$(\#K_l)_{1\leq l\leq p}=(k_l)_{1\leq l\leq p}$. For $1\leq j\leq J$,
define $\D \kappa_j=\#\{1\leq l\leq p;\ \gamma_j\in K_l\}
=\sum_{l=1}^p\car_{\gamma_j\in K_l}$. Then, one has
\begin{equation}
  \label{eq:24}
  \sum_{j=1}^J\kappa_j=\sum_{l=1}^p\sum_{j=1}^J\car_{\gamma_j\in K_l}
  =k_1+\cdots+k_p. 
\end{equation}
As $\cup_l I_l^{+,\varepsilon}\subset I_C:=[-C,C]$ (for some $C>0$),
thanks to the decorrelation estimates~\eqref{eq:6} for
$H_{\omega,,\ell''}(\Lambda_\ell(\gamma_j))$ (see Lemma~\ref{le:4}) and
using their stochastic independence, one has the following a priori
bound
\begin{equation}
  \label{eq:21}
  \begin{split}
    &\pro \left(\bigcap_{l=1}^p
        \left\{\omega;
        \begin{aligned}
          \forall j\in K_l,\ X_{\Lambda_\ell(\gamma_j),a\ell',b\ell',I^{+,\varepsilon}_l}&=1\\
          \forall j\not\in K_l,\ X_{\Lambda_\ell(\gamma_j),a\ell',b\ell',I^{+,\varepsilon}_l}&=0
        \end{aligned}
      \right\} \right)\\&\leq \prod_{j=1}^J \pro \left\{\omega;
      H_{\omega,,\ell''}(\Lambda_\ell(\gamma_j))\text{ has at least
      }\kappa_j\text{ e.v. in }I_C \right\}\\
    &\leq \prod_{j=1}^J
    (CL^{-d}|\Lambda_\ell(\gamma_j)|)^{\kappa_j}\leq C
    J^{-k_1-k_2-\cdots-k_p}
  \end{split}
\end{equation}
as
$J=L^{d}|\Lambda_\ell(\gamma_j)|^{-1}(1+o(1))=L^{d(1-\beta)}(1+o(1))$;
here, we have used~\eqref{eq:24}.\\
By~\eqref{eq:25}, we have
\begin{equation}
  \label{eq:26}
  \begin{split}
    &\pro \left(\bigcap_{l=1}^p \left\{\sum_{j=1}^J
        X_{\Lambda_\ell(\gamma_j),a\ell',b\ell',I^{+,\varepsilon}_l}=k_l
      \right\}\right) \\&= \sum_{\substack{
        K_l\subset\{1,\cdots,\tilde N\}\\ \#K_l=k_l,\ 1\leq l\leq
        p\\\forall l\not= l',\ K_l\cap K_{l'}=\emptyset }}\pro
    \left(\bigcap_{l=1}^p \left\{\omega;
        \begin{aligned}
          \forall j\in K_l,\ X_{\Lambda_\ell(\gamma_j),a\ell',b\ell',
            I^{+,\varepsilon}_l}&=1\\ \forall j\not\in K_l,\
          X_{\Lambda_\ell(\gamma_j),a\ell',b\ell',I^{+,\varepsilon}_l}&=0
        \end{aligned}
      \right\} \right) \\&\hskip.5cm+ \sum_{\substack{
        K_l\subset\{1,\cdots,\tilde N\}\\ \#K_l=k_l,\ 1\leq l\leq p\\
        \exists l\not= l',\ K_l\cap K_{l'}\not=\emptyset}}\pro
    \left(\bigcap_{l=1}^p \left\{\omega;
        \begin{aligned}
          \forall j\in K_l,\ X_{\Lambda_\ell(\gamma_j),a\ell',b\ell',
            I^{+,\varepsilon}_l}&=1\\ \forall j\not\in K_l,\
          X_{\Lambda_\ell(\gamma_j),a\ell',b\ell',I^{+,\varepsilon}_l}&=0
        \end{aligned}
      \right\} \right).
  \end{split}
\end{equation}
One sets and, by Lemmas~\ref{lemasympt} and~\ref{le:3}, one computes
\begin{equation*}
  p_l^+:=\pro\left(X_{\Lambda_\ell(\gamma_j),a\ell',b\ell',
    I^{+,\varepsilon}_l}=1\right)=|I_l^{+,\varepsilon}|J^{-1}(1+o(1)).
\end{equation*}
We also will use~\cite[Lemma 4.1]{Ge-Kl:10} 
\begin{Le}
  \label{le:6}
  With the choice of $(I_l)_{1\leq l\leq p}$ made above, under the
  assumptions of Theorem~\ref{thr:4}, with our choice of $\ell$ and
  $\ell'$, for $L$ sufficiently large, one has
  \begin{equation*}
    \pro\left(\sum_{l=1}^pX_{\Lambda_\ell(\gamma_j),a\ell',b\ell',
        I^{\pm,\varepsilon}_l}=0\right)
    = 1-(1-o(1))J^{-1}\sum_{l=1}^p|I_l^{\pm,\varepsilon}|. 
  \end{equation*}
\end{Le}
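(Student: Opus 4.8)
The plan is to deduce the estimate from the Wegner and Minami bounds, working with a single fixed box $\Lambda_\ell(\gamma_j)$. Abbreviate $X_l:=X_{\Lambda_\ell(\gamma_j),a\ell',b\ell',I^{\pm,\varepsilon}_l}$; these are $\{0,1\}$-valued, and since $\varepsilon$ was chosen smaller than the smallest gap between two distinct intervals among $I_1,\dots,I_p$, the intervals $I_1^{\pm,\varepsilon},\dots,I_p^{\pm,\varepsilon}$ are pairwise disjoint. Hence $\{\sum_{l=1}^pX_l=0\}$ is exactly the complement of $\bigcup_{l=1}^p\{X_l=1\}$, and the first step is to apply Bonferroni's inequalities:
\begin{equation*}
  1-\sum_{l=1}^p\pro(X_l=1)\ \leq\ \pro\Big(\sum_{l=1}^pX_l=0\Big)\ \leq\ 1-\sum_{l=1}^p\pro(X_l=1)+\sum_{1\le l<l'\le p}\pro(X_l=1,X_{l'}=1).
\end{equation*}

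The first-order terms are already under control: combining estimate \eqref{eq:51} of Lemma~\ref{lemasympt} with Lemma~\ref{le:3}, the differentiability of $N$ at $E_0$ and $n(E_0)>0$ — which is precisely what \eqref{eq:108} records — one gets $\pro(X_l=1)=|I_l^{\pm,\varepsilon}|\,J^{-1}(1+o(1))$ for each $l$, the point being that under the choices $\ell\asymp L^\beta$, $\ell'\asymp L^{\beta'}$ with $0<\beta'<\beta<1$ every error term in \eqref{eq:51} is $o(J^{-1})$ (for $N(I_\Lambda)|\Lambda_\ell|\ell'/\ell$ one uses $\beta'<\beta$; for $(|\Lambda_\ell||I_\Lambda|)^{1+\rho}$ one uses $\beta<1$; the remaining term is exponentially small).

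The real work is the second-order term, which I would bound with the Minami estimate. For $l\neq l'$, on $\{X_l=1\}\cap\{X_{l'}=1\}$ the truncated local operator $\tilde H_{\omega,b\ell'}(\Lambda_\ell(\gamma_j))$ has at least one eigenvalue in each of the two \emph{disjoint} scaled intervals $E_0+(|\Lambda|\,n(E_0))^{-1}I_l^{\pm,\varepsilon}$ and $E_0+(|\Lambda|\,n(E_0))^{-1}I_{l'}^{\pm,\varepsilon}$, hence at least two eigenvalues in the common Minami window $E_0+(|\Lambda|\,n(E_0))^{-1}I_C$, where $I_C\supset\bigcup_l I_l^{\pm,\varepsilon}$ is a fixed bounded interval. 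By Lemma~\ref{le:4}, the Minami estimate \eqref{eq:4} holds for $\tilde H_{\omega,b\ell'}$ with constants uniform in the truncation scale and with $\tilde S_\Lambda(s)\lesssim s$; therefore $\pro(X_l=1,X_{l'}=1)\le\esp[\tr P(\tr P-1)]\lesssim\big((|\Lambda|\,n(E_0))^{-1}|I_C|\,|\Lambda_\ell|\big)^2\asymp J^{-2}$, since $|\Lambda_\ell|\asymp L^{\beta d}$ and $|\Lambda|\asymp L^d$. Summing over the $O(p^2)$ pairs, the last sum in the Bonferroni chain is $O(J^{-2})=o(J^{-1})$; feeding this together with the first-order estimate into that chain yields $\pro(\sum_lX_l=0)=1-(1+o(1))\,J^{-1}\sum_{l=1}^p|I_l^{\pm,\varepsilon}|$, as claimed. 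I expect the main obstacle to be exactly this second-order estimate: one has to check that the Minami bound survives truncating the single-site potential (hence the appeal to Lemma~\ref{le:4}) and that the rescaled intervals remain genuinely disjoint, so that hitting two of them really forces a second eigenvalue inside one Minami window of length $\asymp|\Lambda|^{-1}$.
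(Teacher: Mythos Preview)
Your argument is correct. The paper does not actually supply a proof of this lemma; it simply imports it as \cite[Lemma 4.1]{Ge-Kl:10}. Your Bonferroni/inclusion--exclusion scheme, with the first-order terms handled by~\eqref{eq:108} (i.e.\ Lemma~\ref{lemasympt} transferred to the truncated operator via Lemmas~\ref{le:3} and~\ref{le:4}) and the second-order terms killed by the Minami estimate~\eqref{eq:4} applied to $\tilde H_{\omega,b\ell'}(\Lambda_\ell(\gamma_j))$, is exactly the standard route and is what underlies the cited result.
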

\noindent For $1\leq l\leq p$, pick $K_l\subset\{1,\cdots,J\}$ such
that $(\#K_l)_{1\leq l\leq p}=(k_l)_{1\leq l\leq p}$ and $K_l\cap
K_{l'}=\emptyset$ if $l\not= l'$. For such $(K_l)_{1\leq l\leq p}$,
using the stochastic independence of the operators
$(H_{\omega,,\ell''}(\Lambda_\ell(\gamma_j)))_{1\leq j\leq J}$, one
computes
\begin{equation}
  \label{eq:27}
  \begin{split}
    &\pro \left(\bigcap_{l=1}^p
        \left\{\omega;
        \begin{aligned}
          \forall j\in K_l,\ X_{\Lambda_\ell(\gamma_j),a\ell',b\ell',
            I^{+,\varepsilon}_l}&=1\\ \forall j\not\in K_l,\
          X_{\Lambda_\ell(\gamma_j),a\ell',b\ell',I^{+,\varepsilon}_l}&=0
        \end{aligned}
      \right\} \right)\\&=  \pro\left(\bigcap_{l=1}^p \left\{\omega;
      \begin{aligned}
        \forall j\in K_l,\ X_{\Lambda_\ell(\gamma_j),a\ell',b\ell',
            I^{+,\varepsilon}_l}&=1\\
        \forall j\not\in\bigcup_{l'\not=l}K_{l'},\
        X_{\Lambda_\ell(\gamma_j),a\ell',b\ell',
            I^{+,\varepsilon}_l}&=0
      \end{aligned}
    \right\} \right)\\&= \prod_{j\not\in\cup_{l=1}^p K_l}
  \pro\left(\sum_{l=1}^pX_{\Lambda_\ell(\gamma_j),a\ell',b\ell',
      I^{+,\varepsilon}_l}=0\right)\\&\hskip3cm\cdot
  \prod_{l=1}^p\prod_{j\in K_l} \pro\left(
    X_{\Lambda_\ell(\gamma_j),a\ell',b\ell',
      I^{+,\varepsilon}_l}=1\right) \\& =
  \left(1-\sum_{l=1}^pp^+_l\right)^{J-(k_1+\cdots+k_p)} \prod_{l=1}^p
  \left[p^+_l\right]^{k_l}(1+o(1)) \\& =
  \left(\prod_{l=1}^pe^{-|I^{+,\varepsilon}_l|}|I^{+,\varepsilon}_l|^{k_l}
  \right) J^{-k_1-\cdots-k_p}(1+o(1)).
  \end{split}
\end{equation}
On the other hand, as $k_1+\cdots+k_p$ is bounded, when $J\to+\infty$,
that is, when $|\Lambda|\to+\infty$, one has
\begin{equation}
  \label{eq:28}
  \sum_{\substack{
      K_l\subset\{1,\cdots,J\}\\ \#K_l=k_l,\ 1\leq l\leq
      p}}1=
  \prod_{l=1}^p\binom{J}{k_l}\quad\text{ and } 
  \sum_{\substack{
      K_l\subset\{1,\cdots,J\}\\ \#K_l=k_l,\ 1\leq l\leq
      p\\\exists l\not= l',\ K_l\cap K_{l'}\not=\emptyset}}1=o
  \left(  \prod_{l=1}^p\binom{J}{k_l}\right).
\end{equation}
Thus, for $L$ sufficiently large, plugging the a-priori
bound~\eqref{eq:21}, the bounds~\eqref{eq:28} and~\eqref{eq:27}
into~\eqref{eq:26}, we compute
\begin{equation*}
  \begin{split}
    &\pro \left(\bigcap_{l=1}^p \left\{\sum_{j=1}^J
        X_{\Lambda_\ell(\gamma_j),a\ell',b\ell',I^{+,\varepsilon}_l}=k_l
      \right\}\right) \\&=
    \left(\prod_{l=1}^pe^{-|I^{+,\varepsilon}_l|}|I^{+,\varepsilon}_l|^{k_l}
    \right) \prod_{l=1}^p\binom{J}{k_l} J^{-k_1-\cdots-k_p}+o
    \left(J^{-k_1-k_2-\cdots-k_p}
      \prod_{l=1}^p\binom{J}{k_l}\right)\\&=
    \prod_{l=1}^p\frac{e^{-|I^{+,\varepsilon}_l|}|I^{+,\varepsilon}_l|^{k_l}}{k_l!}
    +o(1)= \prod_{l=1}^p\frac{e^{-|I_l|}|I_l|^{k_l}}{k_l!}
    +O(\varepsilon).
  \end{split}
\end{equation*}
This proves the second inequality in~\eqref{eq:29}, thus,
in~\eqref{eq:20}. The first one is proved in the same way. The proof
of Theorem~\ref{thr:4} is complete.\qed
\def\cprime{$'$} \def\cydot{\leavevmode\raise.4ex\hbox{.}} \def\cprime{$'$}

\end{document}